\DeclareMathOperator{\cheap}{cheap}
\DeclareMathOperator{\expan}{expan}
\begin{document}
\title{Deterministic Performance Guarantees for Bidirectional BFS on Real-World Networks}
\titlerunning{Deterministic Performance for Bi-BFS on Real-World Networks}
% If the paper title is too long for the running head, you can set
% an abbreviated paper title here
%
\author{Thomas Bläsius\inst{1}\orcidID{0000-0003-2450-744X} \and
Marcus Wilhelm\inst{1}\orcidID{0000-0002-4507-0622}}
\authorrunning{T. Bläsius, M. Wilhelm}
% First names are abbreviated in the running head.
% If there are more than two authors, 'et al.' is used.
%
\institute{Karlsruhe Institute of Technology (KIT), Karlsruhe, Germany\\
  \email{first.last@kit.edu}}
\maketitle              % typeset the header of the contribution
\begin{abstract}
  A common technique for speeding up shortest path queries in graphs is
  to use a bidirectional search, i.e., performing a forward search
  from the start and a backward search from the destination until a
  common vertex on a shortest path is found. In practice, this has a
  massive impact on performance in some real-world networks,
  while it seems to save only a constant factor in other types of
  networks. Although finding shortest paths is a ubiquitous
  problem, only few studies have attempted to explain the
  apparent asymptotic speedups on some networks using average case
  analysis on certain models of real-world networks.

  In this paper we provide a new perspective on this, by analyzing
  deterministic properties that allow theoretical analysis and that
  can be easily checked on any particular instance. We prove that
  these parameters imply sublinear running time for the bidirectional
  breadth-first search in several regimes, some of which are tight.
  Furthermore, we perform experiments on a large set of real-world
  networks and show that our parameters capture the concept of
  practical running time well.

\keywords{scale-free networks \and bidirectional BFS \and bidirectional shortest paths \and distribution-free analysis.}

{\footnotesize Note: this is the extended version of a paper accepted at IWOCA'23}
\end{abstract}

\section{Introduction}
\label{sec:intro}

A common way to speed up the search for a shortest path between two
vertices is to use a bidirectional search strategy instead of a
unidirectional one.  The idea is to explore the graph from both, the
start and the destination vertex, until a common vertex somewhere in
between is discovered.  Even though this does not improve upon the
linear worst-case running time of the unidirectional search, it leads
to significant practical speedups on some classes of networks.
Specifically, Borassi and Natale~\cite{borassi_kadabra_2016} found
that bidirectional search seems to run asymptotically faster than
unidirectional search on scale-free real-world networks.  This does,
however, not transfer to other types of networks like for example
transportation networks, where the speedup seems to be a constant
factor~\cite{DBLP:series/lncs/BastDGMPSWW16}.

There are several results aiming to explain the practical run times
of the bidirectional search, specifically of the balanced
bidirectional breadth-first search (short: bidirectional BFS). These
results have in common that they analyze the bidirectional BFS on
probabilistic network models with different properties. Borassi and
Natale~\cite{borassi_kadabra_2016} show that it takes $O(\sqrt{n})$
time on Erdős-Rényi-graphs~\cite{er-rgi-59} with high probability.
The same holds for slightly non-uniform random graphs as long as the
edge choices are independent and the second moment of the degree
distribution is finite. For more heterogeneous power-law degree
distributions with power-law exponent in $(2, 3)$, the running time
is $O(n^c)$ for $c∈ [1/2, 1)$. Note that this covers a wide range of
networks with varying properties in the sense that it predicts
sublinear running times for homogeneous as well as heterogeneous
degree distributions. However, the proof for these results heavily
relies on the independence of edges, which is not necessarily given
in real-world networks. Bläsius et al.~\cite{bff-espsf-22} consider
the bidirectional BFS on network models that introduce dependence of
edges via an underlying geometry. Specifically, they show sublinear
running time if the underlying geometry is the hyperbolic plane,
yielding networks with a heterogeneous power-law degree distribution.
Moreover, if the underlying geometry is the Euclidean plane, they
show that the speedup is only a constant factor.

Summarizing these theoretical results, one can roughly say that the
bidirectional BFS has sublinear running time unless the network has
dependent edges and a homogeneous degree distribution.  Note that
this fits to the above observation that bidirectional search works
well on many real-world networks, while it only achieves a constant
speedup on transportation networks.  However, these theoretical
results only give actual performance guarantees for networks
following the assumed probability distributions of the analyzed
network models.  Thus, the goal of this paper is to understand the
efficiency of the bidirectional BFS in terms of deterministic
structural properties of the considered network.

\paragraph{Intuition.}
To present our technical contribution, we first give high-level
arguments and then discuss where these simple arguments fail.  As
noted above, the bidirectional BFS is highly efficient unless the
networks are homogeneous and have edge dependencies.  In the field of
network science, it is common knowledge that these are the networks
with high diameter, while other networks typically have the
small-world property.  This difference in diameter coincides with
differences in the expansion of search spaces.  To make this more
specific, let $v$ be a vertex in a graph and let $f_v(d)$ be the
number of vertices of distance at most $d$ from $v$.  In the
following, we consider two settings, namely the setting of
\emph{polynomial expansion} with $f_v(d) \approx d^2$ and that of
\emph{exponential expansion} with $f_v(d) \approx 2^d$ for all
vertices $v \in V$.  Now assume we use a BFS to compute the shortest
path between vertices $s$ and $t$ with distance~$d$.

To compare the unidirectional with the bidirectional BFS, note that
the former explores the $f_s(d)$ vertices at distance $d$ from $s$,
while the latter explores the $f_s(d / 2) + f_t(d / 2)$ vertices at
distance $d / 2$ from $s$ and $t$.  In the polynomial expansion
setting, $f_s(d / 2) + f_t(d / 2)$ evaluates to
$2(d/2)^2 = d^2/2 = f_s(d) / 2$, yielding a constant speedup of $2$.
In the exponential expansion setting, $f_s(d / 2) + f_t(d / 2)$
evaluates to $2 \cdot 2^{d / 2} = 2 \sqrt{f_s(d)}$, resulting in a
polynomial speedup.

With these preliminary considerations, it seems like exponential
expansion is already the deterministic property explaining the
asymptotic performance improvement of the bidirectional BFS on many
real-world networks. However, though this property is strong enough
to yield the desired theoretic result, it is too strong to actually
capture real-world networks. There are two main reasons for that.
First, the expansion in real-world networks is not that clean, i.e.,
the actual increase of vertices varies from step to step. Second, and
more importantly, the considered graphs are finite and with
exponential expansion, one quickly reaches the graph's boundary where
the expansion slows down. Thus, even though search spaces in
real-world networks are typically expanding quickly, it is crucial to
consider the number of steps during which the expansion persists. To
actually capture real-world networks, weaker conditions are needed.

\paragraph{Contribution.} The main contribution of this paper is
to solve this tension between wanting conditions strong enough to
imply sublinear running time and wanting them to be sufficiently weak
to still cover real-world networks.
%to have strong conditions to be able to prove sublinear running time
%and wanting to have sufficiently weak conditions to cover real-world
%networks.
We solve this by defining multiple parameters describing expansion
properties of vertex pairs. These parameters address the above issues
by covering a varying amount of expansion and stating requirements on
how long the expansion lasts. We refer to \cref{sec:prelim} and
\cref{sec:expand-search-spac-basic-prop} for the exact technical
definitions, but intuitively we define the \emph{expansion overlap}
as the number of steps for which the exploration cost is growing
exponentially in both directions. Based on this, we give different
parameter settings in which the bidirectional search is sublinear. In
particular, we show sublinear running time for logarithmically sized
expansion overlap (\cref{lem_sublinear_by_large_overlap}) and for an
expansion overlap linear in the distance between the queried vertices
(\cref{lem_sublinear_with_cheap_start}, the actual statement is
stronger). For a slightly more general setting we also prove a tight
criterion for sublinear running time in the sense that the parameters
either guarantee sublinear running time or that there exists a family
of graphs that require linear running time (\cref{thm_dichotomy}).
Note that the latter two results also require the relative difference
between the minimum and maximum expansion to be constant. Finally, we
demonstrate that our parameters do indeed capture the behavior
actually observed in practice by running experiments on more than
\SI{3}{k} real-world networks.

\paragraph{Related work.} Our results fit into the more general
theme of defining distribution-free~\cite{rs-dfmsn-21} properties that capture
real-world networks and analyzing algorithms based on these
deterministic properties.

Borassi, Crescenzi, and Trevisan~\cite{DBLP:conf/soda/BorassiCT17}
analyze heuristics for graph properties such as the diameter and
radius as well as centrality measures such as closeness. The analysis
builds upon a deterministic formulation of how edges form based on
independent probabilities and the birthday paradox. The authors
verify their properties on multiple probabilistic network models as
well as real-world networks.

Fox et al.~\cite{DBLP:journals/siamcomp/FoxRSWW20} propose a
parameterized view on the concept of triadic closure in real-world
networks. This is based on the observation that in many networks, two
vertices with a common neighbor are likely to be adjacent. The
authors thus call a graph $c$-closed if every pair of vertices $u,v$
with at least $c$ common neighbors is adjacent. They show that
enumerating all maximal cliques is in FPT for parameter $c$ and also
for a weaker property called weak $c$-closure. The authors also
verify empirically that real-world networks are weakly $c$-closed for
moderate values of $c$.

\section{Preliminaries}\label{sec:prelim}

We consider simple, undirected, and connected graphs $G = (V, E)$
with $n = |V|$ vertices and $m = |E|$ edges. For vertices $s, t ∈ V$
we write $d(s,t)$ for the \emph{distance} of $s$ and $t$, that is the
number of edges on a shortest path between $s$ and $t$. For
$i,j ∈ \mathbb{N}$, we write $[i]$ for the set $\{1,\dots,i\}$ and $[i,j]$ for
$\{i, \dots, j\}$. In a (unidirectional) \emph{breadth-first search}
(\emph{BFS}) from a vertex $s$, the graph is explored layer by layer
until the target vertex $t ∈ V$ is discovered. More formally, for a
vertex $v ∈ V$, the $i$-th \emph{BFS layer around $v$} (short:
\emph{layer}), $\ell_G(v,i)$, is the set of vertices that have
distance exactly $i$ from $v$. Thus, the BFS starts with
$\ell_G(s,0) = \{s\}$ and then iteratively computes $\ell_G(s,i)$
from $\ell_G(s,i-1)$ by iterating through the neighborhood of
$\ell_G(s,i-1)$ and ignoring vertices contained in earlier layers. We
call this the $i$-th \emph{exploration step} from $s$. We omit the
subscript $G$ from the above notation when it is clear from context.

In the \emph{bidirectional} BFS, layers are explored both from $s$
and $t$ until a common vertex is discovered. This means that the
algorithm maintains layers $\ell(s,i)$ of a \emph{forward search}
from $s$ and layers $\ell(t, j)$ of a \emph{backward search} from $t$
and iteratively performs further exploration steps in one of the
directions.
The decision about which search direction
to progress in each step is determined according to an \emph{alternation strategy}.
Note that we only allow the algorithm to switch between the search
directions after fully completed exploration steps.  % It will soon be
% clear that the particular order in which exploration steps are
% performed does not influence the cost and that only the number of
% steps in the forward and the backward search is relevant.
If the forward search performs $k$ exploration steps and the backward search
the remaining $d(s,t) - k$, then we say that the search \emph{meets} at layer
$k$.

In this paper, we analyze a particular alternation strategy called
the \emph{balanced} alternation strategy~\cite{borassi_kadabra_2016}.
This strategy greedily chooses to continue with an exploration step
in either the forward or backward direction, depending on which is
cheaper. Comparing the anticipated cost of the next exploration step
requires no asymptotic overhead, as it only requires summing the
degrees of vertices in the preceding layer. The following lemma gives
a running time guarantee for balanced BFS relative to any other
alternation strategy. This lets us consider arbitrary alternation
strategies in our mathematical analysis, while only costing a factor
of $d(s,t)$, which is typically at most logarithmic.

\begin{lemma}[{\cite[Theorem~3.2]{bff-espsf-22}}]\label{lemma_balanced_good}
    Let $G$ be a graph and $(s, t)$ a start--destination pair with
    distance $d(s,t)$. If there exists an alternation strategy such
    that the bidirectional BFS between $s$ and $t$ explores $f(n)$
    edges, then the balanced bidirectional search explores at most
    $d(s,t) \cdot f(n)$ edges.
\end{lemma}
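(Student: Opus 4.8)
The plan is to track the balanced search step by step and to compare the cost of each individual step against the \emph{total} cost $f(n)$ of the hypothesized good strategy. The first ingredient I would establish is that the forward BFS layers $\ell(s,i)$ and backward BFS layers $\ell(t,j)$ are determined solely by $G$, $s$, and $t$, so no alternation strategy can change them. Consequently the cost of the $i$-th forward exploration step, $F_i := \sum_{v \in \ell(s,i-1)} \deg(v)$ (the work of iterating through the neighborhood of layer $i-1$), and likewise the cost $B_j := \sum_{v \in \ell(t,j-1)} \deg(v)$ of the $j$-th backward step, are well-defined nonnegative quantities independent of the strategy, and a strategy that performs $a$ forward and $b$ backward steps explores exactly $\sum_{i=1}^{a} F_i + \sum_{j=1}^{b} B_j$ edges.

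The second ingredient is a precise description of the stopping time. After $a$ forward and $b$ backward steps, the forward-explored vertices are exactly those at distance at most $a$ from $s$ and the backward-explored ones exactly those at distance at most $b$ from $t$; these sets intersect iff $a + b \ge d(s,t)$ — one direction by the triangle inequality, the other by taking the vertex at distance $\min(a,d(s,t))$ along a shortest $s$--$t$ path. Since each step increases $a+b$ by one starting from $0$, \emph{every} alternation strategy performs exactly $d(s,t)$ exploration steps before the searches meet. In particular, if $(a^\ast,b^\ast)$ with $a^\ast + b^\ast = d(s,t)$ is the forward/backward split realized by the good strategy, then $f(n) = \sum_{i=1}^{a^\ast} F_i + \sum_{j=1}^{b^\ast} B_j$, and hence $F_i \le f(n)$ for all $i \le a^\ast$ and $B_j \le f(n)$ for all $j \le b^\ast$.

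The heart of the argument is to bound the cost of each single step of the balanced strategy by $f(n)$. Suppose the balanced search is in a non-terminal state, having performed $a$ forward and $b$ backward steps with $a+b < d(s,t)$; its next step costs $\min(F_{a+1}, B_{b+1})$. Since $a+b < a^\ast + b^\ast$, we cannot have both $a \ge a^\ast$ and $b \ge b^\ast$, so either $a+1 \le a^\ast$, giving $F_{a+1} \le f(n)$, or $b+1 \le b^\ast$, giving $B_{b+1} \le f(n)$; either way the cheaper step costs at most $f(n)$. Summing over the exactly $d(s,t)$ steps performed by the balanced search yields the claimed bound $d(s,t)\cdot f(n)$.

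I expect the only delicate points to be (i) arguing cleanly that per-step costs are strategy-independent and that the total edge count is just the sum of these costs regardless of how the steps are interleaved, and (ii) the small pigeonhole step combined with the exact characterization of when the two searches first meet; the case $s = t$ (zero steps, zero edges) is immediate, and no deeper machinery appears to be needed.
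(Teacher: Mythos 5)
Your proof is correct. The paper does not prove this lemma itself --- it is imported verbatim from \cite[Theorem~3.2]{bff-espsf-22} --- and your argument (per-step costs are strategy-independent, every strategy performs exactly $d(s,t)$ steps before the searches meet, and at each non-terminal state the pigeonhole on $a+b < a^\ast + b^\ast$ shows the cheaper of the two candidate steps is one the reference strategy also pays for, hence costs at most $f(n)$) is exactly the standard argument used there.
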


The forward and backward search need to perform a total of $d(s,t)$
exploration steps. To ease the notation, we say that exploration step
$i$ (of the bidirectional search between $s$ and $t$) is either the
step of finding $ℓ(s,i)$ from $\ell(s,i-1)$ in the forward search or the
step of finding $ℓ(t, d(s,t) + 1 - i)$ from $ℓ(t, d(s,t) - i)$ in the
backward search. For example, exploration step 1 is the step in which
either the forward search finds the neighbors of $s$ or in which $s$
is discovered by the backwards search. Also, we identify the $i$-th
exploration step with its index $i$, i.e., $[d(s, t)]$ is the set of
all exploration steps. We often consider multiple consecutive
exploration steps together. For this, we define the interval
$[i, j] \subseteq [d(s, t)]$ to be a \emph{sequence} for
$i, j \in [d(s, t)]$. The \emph{exploration cost} of exploration step
$i$ from $s$ equals the number of visited edges with endpoints in
$ℓ(s,i-1))$, i.e., $c_{s}(i) = \sum_{v \in ℓ(s,i-1)}\deg(v)$. The
exploration cost for exploration step $i$ from $t$ is
$c_t(i) = \sum_{v∈ \ell(t,d(s, t) - i)} \deg(v)$. For a sequence
$[i, j]$ and $v \in \{s, t\}$, we define the cost
$c_v([i, j]) = \sum_{k \in [i, j]} c_{v}(k)$. Note that the notion of
exploration cost is an independent graph theoretic property and also
valid outside the context of a particular run of the bidirectional
BFS in which the considered layers are actually explored.

For a vertex pair $s, t$ we write $c_\mathrm{bi}(s,t)$ for the cost
of the bidirectional search with start $s$ and destination $t$. Also,
as we are interested in polynomial speedups, i.e., $\mathcal{O}(m^{1-ε})$
vs. $\mathcal{O}(m)$, we use $\tilde{\mathcal{O}}$-notation to suppress
poly-logarithmic factors.

\section{Performance Guarantees for Expanding Search Spaces}
\label{sec:expansion}

We now analyze the bidirectional BFS based on expansion properties.
In \cref{sec:expand-search-spac-basic-prop}, we introduce expansion,
including the concept of expansion overlap, state some basic
technical lemmas and give an overview of our results. In the
subsequent sections, we then prove our results for different cases of
the expansion overlap. Due to space limitations some proofs are in
the appendix.

\subsection{Expanding Search Spaces and Basic Properties}
\label{sec:expand-search-spac-basic-prop}

We define \emph{expansion} as the relative growth of the search space
between adjacent layers.  Let $[i, j]$ be a sequence of exploration
steps.  We say that $[i, j]$ is \emph{$b$-expanding from $s$} if for
every step $k \in [i, j)$ we have $c_s(k + 1) \ge b\cdot c_s(k)$.
Analogously, we define $[i, j]$ to be \emph{$b$-expanding from $t$} if
for every step $k \in (i, j]$ we have $c_t(k - 1) \ge b\cdot c_t(k)$.
Note that the different definitions for $s$ and $t$ are completely
symmetrical.
With this definition layed out, we investigate its relationship with
logarithmic distances.
% We first investigate the relationship between expansion
          %and short distances.
%With this definition, we first prove that expanding search spaces lead
%to logarithmic distances.

%\begin{restatable}{lemma}{lemExpansionDistSmall}
\begin{lemma}
\label{lem_expansion_dist_small}
  Let $G=(V, E)$ be a graph and let $s, t ∈ V$ be vertices such that
  the sequence $[1, c \cdot d(s,t)]$ is $b$-expanding from $s$ for
  constants $b > 1$ and $c > 0$.  Then $d(s,t) ≤ \log_b(2m)/c$.
%\end{restatable}
\end{lemma}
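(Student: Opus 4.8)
The plan is to bound $d(s,t)$ by observing that $b$-expansion from $s$ forces the exploration cost to grow geometrically, and the total exploration cost can never exceed $2m$, since every edge contributes at most twice (once per endpoint) to the sum of all layer costs. First I would note that $c_s(1) = \sum_{v \in \ell(s,0)} \deg(v) = \deg(s) \ge 1$, so the first layer cost is at least $1$. Then, applying the $b$-expanding property along the sequence $[1, c \cdot d(s,t)]$ step by step, for every $k \in [1, c\cdot d(s,t)]$ we get $c_s(k) \ge b^{k-1} \cdot c_s(1) \ge b^{k-1}$. In particular, taking $k = c \cdot d(s,t)$ (or $\lfloor c \cdot d(s,t)\rfloor$, which only changes constants and can be absorbed, or one can note the sequence endpoint is an integer exploration step by definition), the cost of that single layer is at least $b^{c \cdot d(s,t) - 1}$.

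Next I would bound the total cost from above: $\sum_{k} c_s(k) = \sum_k \sum_{v \in \ell(s, k-1)} \deg(v) \le \sum_{v \in V} \deg(v) = 2m$, because the layers $\ell(s, 0), \ell(s, 1), \dots$ are pairwise disjoint (they partition the vertex set by distance from $s$), so each vertex $v$ contributes its degree to at most one summand. Since one individual layer cost $c_s(c \cdot d(s,t))$ is at most the total, we get $b^{c \cdot d(s,t) - 1} \le c_s(c \cdot d(s,t)) \le 2m$. Taking $\log_b$ of both sides yields $c \cdot d(s,t) - 1 \le \log_b(2m)$, hence $d(s,t) \le (\log_b(2m) + 1)/c$. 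To get the cleaner bound $d(s,t) \le \log_b(2m)/c$ stated in the lemma, one should instead compare $2m$ against the sum $c_s([1, c\cdot d(s,t)]) \ge \sum_{k=1}^{c \cdot d(s,t)} b^{k-1} \ge b^{c \cdot d(s,t) - 1}$ more carefully, or simply use $c_s(c\cdot d(s,t)) \ge b^{c\cdot d(s,t)-1}$ together with the fact that a layer of cost $C$ means at least that many edge-endpoints, i.e.\ really $b^{\,c\,d(s,t)} \le 2m$ once one is slightly more careful with the index range; either way the algebra is routine.

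The only mild subtlety — and the step I would be most careful about — is the off-by-one in the exponent and whether $c \cdot d(s,t)$ is an integer, since $b$-expansion is stated over integer exploration steps $[i,j] \subseteq [d(s,t)]$. I would handle this by working with $\lfloor c\cdot d(s,t)\rfloor$ and checking that the resulting constant loss is consistent with the stated bound, or by noting that the intended reading is that $c\cdot d(s,t)$ indexes a valid exploration step. There is no genuine obstacle here: the argument is entirely a geometric-growth-versus-edge-budget comparison.
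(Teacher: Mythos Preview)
Your proposal is correct and follows essentially the same approach as the paper: bound the cost of the last expanding exploration step from below by a power of $b$ via geometric growth, bound it from above by $2m$ since the layers partition $V$ and $\sum_v \deg(v) = 2m$, and take logarithms. The paper's own proof is in fact terser and less careful about the off-by-one than yours---it simply asserts the cost at step $c\cdot d(s,t)$ is at least $b^{c\cdot d(s,t)}$ and bounds by $2m$---so your caution about the index range is well-placed but not something the authors dwelt on.
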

%\lemExpansionDistSmall*
\begin{proof}
    The cost of discovering the layer with distance $c \cdot d(s,t)$ from $s$ is at least $b^{c \cdot d(s,t)}$.
    Thus we have $b^{c \cdot d(s,t)} ≤ 2m$, which can be rearranged to $c \cdot d(s,t) ≤ \log_b(2m)$.
\end{proof}

Note that this lemma uses $s$ and $t$ symmetrically and also applies to
expanding sequences from $t$. Together with \cref{lemma_balanced_good}, this
allows us to consider arbitrary alternation strategies that are convenient for
our proofs.
%Next, we show that for a $b$-expanding sequence of exploration steps the total cost is dominated by the cost of the last step, which often simplifies the analysis.
Next, we show that the total cost of a $b$-expanding sequence of exploration
steps is constant in the cost of the last step, which often simplifies
calculations.

%\begin{lemma}
\begin{restatable}{lemma}{lemFractionLastLayer}
  \label{lemma_fraction_last_layer}
    For $b>1$ let $f:ℕ \mapsto ℝ$ be a function with $f(i) ≥ b \cdot f(i-1)$ and $f(1) = c$ for some constant $c$.
    Then $f(n) / \sum_{i=1}^n f(i) ≥ \frac{b-1}{b}$.
%\end{lemma}
\end{restatable}

We define four specific exploration steps depending on two constant parameters
$0<α<1$ and $b>1$. First, $\cheap_s(α)$ is the latest step such that
$c_s([1, \cheap_s(α)]) ≤ m^α$. Moreover, $\expan_s(b)$ is the latest step such
that the sequence $[1, \expan_s(b)]$ is $b$-expanding from $s$. Analogously, we
define $\cheap_t(α)$ and $\expan_t(b)$ to be the smallest exploration steps such
that $c_t([\cheap_t(α), d(s, t)]) ≤ m^α$ and $[\expan_t(b), d(s,t)]$ is
$b$-expanding from $t$, respectively. If $\expan_t(b) \le \expan_s(b)$, we say
that the sequence $[\expan_t(b), \expan_s(b)]$ is a \emph{$b$-expansion overlap}
of size $\expan_s(b) - \expan_t(b)+ 1$. See \cref{fig_seq_expl_steps} for a
visualization of these concepts. Note that the definition of $\expan_{s}$ (reps.
$\expan_t$) cannot be relaxed to only require expansion behind $\cheap_s$ (resp.
$\cheap_t$), as in that case an existing expansion overlap no longer implies
logarithmic distance between $s$ and $t$. This allows for the construction of
instances with linear running time (see \cref{lem:no_relax} in the appendix). To
simplify notation, we often omit the parameters $\alpha$ and $b$ as well as the
subscript $s$ and $t$ if they are clear from the context. Note that $\cheap_{s}$
or $\cheap_{t}$ is undefined if $c_{s}(1) > m^{α}$ or $c_{t}(d(s,t)) > m^{α}$.
Moreover, in some cases $\expan_v$ may be undefined for $v∈ \{s, t\}$, if the
first exploration step of the corresponding sequence is not $b$-expanding. Such
cases are not relevant in the remainder of this paper.

\begin{figure}[t]
    \centering
    \includegraphics[width = \textwidth]{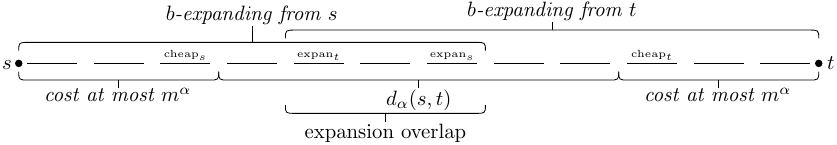}
    \caption{Visualization of $\cheap_v$, $\expan_v$ and related
    concepts.  Each line stands for an exploration step between $s$
    and $t$. Additionally, certain steps and sequences relevant for
    \cref{lem_sublinear_by_large_overlap} and
    \cref{lem_sublinear_with_cheap_start} are marked.}
    \label{fig_seq_expl_steps}
\end{figure}

\paragraph{Overview of our Results.}

Now we are ready to state our results.  Our first result
(\cref{lem_sublinear_by_large_overlap}) shows that for $b > 1$ we
obtain sublinear running time if the expansion overlap has size at
least $\Omega(\log m)$.  Note that this already motivates why the two
steps $\expan_s$ and $\expan_t$ and the resulting expansion overlap
are of interest.

The logarithmic expansion overlap required for the above result is of course a
rather strong requirement that does not apply in all cases where we expect
expanding search spaces to speed up bidirectional BFS. For instance, the
expansion overlap is at most the distance between $s$ and $t$, which might
already be too small. This motivates our second result
(\cref{lem_sublinear_with_cheap_start}), where we only require an expansion
overlap of sufficient relative length, as long as the maximum expansion is at
most a constant factor of the minimum expansion $b$. Additionally, we make use
of the fact that $\cheap_s$ and $\cheap_t$ can give us initial steps of the
search that are cheap. Formally, we define the \emph{($α$-)relevant distance} as
$d_α(s,t) = \cheap_t - \cheap_s - 1$ and require expansion overlap linear in
$d_\alpha(s, t)$, i.e., we obtain sublinear running time if the expansion
overlap is at least $c\cdot d_α(s,t)$ (see also \cref{fig_seq_expl_steps}) for
some constant $c$.

Finally, in our third result (\cref{thm_dichotomy}), we relax the
condition of \cref{lem_sublinear_with_cheap_start} further by
allowing expansion overlap that is sublinear in $d_α(s,t)$ or even
non-existent. The latter corresponds to non-positive expansion
overlap, when extending the above definition to the case
$\expan_t > \expan_s$. Specifically, we define $S_1 = \expan_s$,
$S_2 = \cheap_t - \expan_s - 1$, $T_1 = d(s, t) - \expan_t + 1$, and
$T_2 = \expan_t - \cheap_s - 1$ (see \cref{fig_seq_expl_steps_rho})
and give a bound for which values of
\begin{equation*}
    \rho = \frac {\max\{S_2, T_2\}} {\min\{S_1, T_1\}},
\end{equation*}
sublinear running time can be guaranteed. We write $ρ_{s,t}(α,b)$ if
these parameters are not clear from context. This bound is tight (see
\cref{lem_linear_example}), i.e., for all larger values of $\rho$ we
give instances with linear running time.

\subsection{Large Absolute Expansion Overlap}\label{sec:large-absol-expans}

We start by proving sublinear running time for a logarithmic
expansion overlap.

%\begin{restatable}{theorem}{lemSublinearByLargeOverlap}
\begin{theorem}
\label{lem_sublinear_by_large_overlap}%{{{
    For parameter $b>1$ let $s, t ∈ V$ be a
    start--destination pair with a $b$-expansion overlap of size
    at least $c \log_b(m)$ for a constant $c>0$.  Then
    $c_\mathrm{bi}(s,t) ≤ 8\log_b(2m) \cdot \frac{b^2}{b-1} \cdot m^{1
    - c/2}$.
%\end{restatable}
\end{theorem}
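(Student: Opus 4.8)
The plan is to first bound $c_\mathrm{bi}(s,t)$ for one conveniently chosen alternation strategy and then invoke \cref{lemma_balanced_good} to transfer the bound to the balanced strategy at the cost of an extra factor $d(s,t)$, which the hypotheses already force to be $\mathcal{O}(\log_b m)$. Write $L = \expan_s - \expan_t + 1 \ge c\log_b(m)$ for the size of the $b$-expansion overlap. The convenient strategy lets the two searches \emph{meet} roughly in the middle of the overlap: choose the meeting layer $k$ with $k \approx (\expan_s + \expan_t)/2$, so that, after rounding, both $\expan_s - k$ and $(k+1) - \expan_t$ are at least $(c\log_b(m)-1)/2$.

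For this strategy the forward search performs steps $[1,k]$ and the backward search performs steps $[k+1, d(s,t)]$, so its total cost is $c_s([1,k]) + c_t([k+1,d(s,t)])$. Since $k \le \expan_s$, the sequence $[1,k]$ is $b$-expanding from $s$, and \cref{lemma_fraction_last_layer} applied to $f(i) = c_s(i)$ gives $c_s([1,k]) \le \frac{b}{b-1}\,c_s(k)$; symmetrically, since $k+1 > \expan_t$, the sequence $[k+1,d(s,t)]$ is $b$-expanding from $t$ (after reversing and re-indexing), and the mirrored version of \cref{lemma_fraction_last_layer} gives $c_t([k+1,d(s,t)]) \le \frac{b}{b-1}\,c_t(k+1)$. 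It therefore remains only to bound the two single-layer costs $c_s(k)$ and $c_t(k+1)$.

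This is exactly where the overlap is used: inside $[\expan_t,\expan_s]$ the forward cost multiplies by at least $b$ each step while the backward cost divides by at least $b$ each step. Using that the total degree of any single BFS layer is at most $2m$, we get
\begin{equation*}
    c_s(k) \le b^{-(\expan_s - k)}\,c_s(\expan_s) \le 2m\cdot b^{-(\expan_s-k)}
    \quad\text{and}\quad
    c_t(k+1) \le 2m\cdot b^{-((k+1)-\expan_t)}.
\end{equation*}
With the choice of $k$ above both exponents are at least $(c\log_b(m)-1)/2$, so each of $c_s(k)$ and $c_t(k+1)$ is at most $2b^{1/2}m^{1-c/2}$, and hence the chosen strategy costs at most $\frac{b}{b-1}\cdot 4b^{1/2}m^{1-c/2} \le \frac{4b^2}{b-1}\,m^{1-c/2}$.

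Finally I would bound $d(s,t)$: the $b$-expanding sequence $[1,\expan_s]$ forces $b^{\expan_s-1} \le c_s(\expan_s) \le 2m$, so $\expan_s \le \log_b(2m)+1$, and the $b$-expanding sequence from $t$ forces $b^{d(s,t)-\expan_t} \le c_t(\expan_t) \le 2m$, so $d(s,t)-\expan_t \le \log_b(2m)$; combined with $\expan_t \le \expan_s$ (using the overlap size to absorb the additive constants, while the remaining tiny values of $m$ are covered by the trivial $\mathcal{O}(m)$ bound that holds for any BFS) this yields $d(s,t) \le 2\log_b(2m)$. Multiplying through \cref{lemma_balanced_good} gives $c_\mathrm{bi}(s,t) \le 2\log_b(2m)\cdot\frac{4b^2}{b-1}\,m^{1-c/2} = 8\log_b(2m)\cdot\frac{b^2}{b-1}\cdot m^{1-c/2}$. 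The only conceptual ingredient is the middle-of-the-overlap meeting point; the step that needs care is the surrounding bookkeeping — restricting and re-indexing the $b$-expansion property so \cref{lemma_fraction_last_layer} applies in both directions, and controlling the floors/ceilings and the additive terms in the $d(s,t)$ estimate without exceeding the claimed constant $8\frac{b^2}{b-1}$.
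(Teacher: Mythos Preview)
Your proof is correct and follows essentially the same approach as the paper: choose the meeting point in the middle of the expansion overlap, reduce the cost of each search direction to a single-layer cost via \cref{lemma_fraction_last_layer}, bound that single-layer cost using the remaining $\approx\tfrac{c}{2}\log_b m$ expansion steps, and pay an extra $d(s,t)\le 2\log_b(2m)$ factor from \cref{lemma_balanced_good}. The only cosmetic differences are that the paper uses a WLOG to handle just one of $c_s(k),c_t(k+1)$ and invokes \cref{lem_expansion_dist_small} for the distance bound, whereas you bound both sides symmetrically and derive the distance bound directly; your bookkeeping around floors and the additive $+1$ in $d(s,t)$ is a touch hand-wavier than the paper's, but the slack in the constant $\frac{b^2}{b-1}$ (you actually get $\frac{b^{3/2}}{b-1}$) comfortably absorbs it.
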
%}}}
%\lemSublinearByLargeOverlap*
\begin{proof}%{{{
  We analyze bidirectional search when meeting in the middle
  $k_\mathrm{mid}$ (rounded either up or down) of the expansion
  overlap.  Using \cref{lemma_balanced_good} for an upper bound on
  the cost of the balanced bidirectional search under the assumed
  meeting point, we get
    \begin{equation*}
        c_\mathrm{bi}(s,t) ≤ d(s,t) \cdot \left(
            c_s([1, k_\mathrm{mid}])
            +
            c_t([k_\mathrm{mid}+1, d(s,t)])
            \right).
    \end{equation*}
    For an upper bound on $d(s,t)$, note that as there is an expansion
    overlap, $\expan_s ≥ d(s,t)/2$ or $\expan_t ≤ d(s,t) /2$.
    This means that $d(s,t)≤2\log_b(2m)$ by \cref{lem_expansion_dist_small}.
    Applying \cref{lemma_fraction_last_layer} we get
    \begin{align*}
        c_\mathrm{bi}(s,t) &≤ 2\log_b(2m) \cdot \frac{b}{b-1}
            \left(c_s(k_\mathrm{mid}) + c_t(k_\mathrm{mid}+1)\right),
        \intertext{which, assuming without loss of generality $c_s(k_\mathrm{mid}) ≥ c_t(k_\mathrm{mid}+1)$, gives us}
        &≤ 4\log_b(2m) \cdot \frac{b}{b-1} c_s(k_\mathrm{mid}).
    \end{align*}
    At least $\lfloor \frac12 c \log_b(m) \rfloor$ more $b$-expanding
    layers follow after $ℓ(s,k_\mathrm{mid})$.
    Counting the edges in these layers, we get
    \begin{equation*}
        c_s(k_\mathrm{mid}) \cdot b^{\lfloor \frac12 c \log_b(m) \rfloor} ≤ 2m,
    \end{equation*}
    which can be transformed to
    \begin{equation*}
        c_s(k_\mathrm{mid}) ≤ 2m \cdot b^{-\lfloor \frac12 c \log_b(m) \rfloor}.
    \end{equation*}
    Inserting this into the upper bound for $c_\mathrm{bi}(s,t)$, we get
    \begin{align*}
        c_\mathrm{bi}(s,t) &≤ 8\log_b(2m) \cdot \frac{b}{b-1} m \cdot
        b^{-\lfloor \frac12 c \log_b(m) \rfloor}\\
        &≤ 8\log_b(2m) \cdot \frac{b}{b-1} m \cdot b^{-\frac12 c
        \log_b(m) +1}\\
        &≤ 8\log_b(2m) \cdot \frac{b^2}{b-1} m \cdot b^{-\frac12 c \log_b(m)}\\
        &≤ 8\log_b(2m) \cdot \frac{b^2}{b-1} \cdot m^{1 - \frac12 c}.\qedhere
    \end{align*}
\end{proof}%}}}

\subsection{Large Relative Expansion Overlap}
\label{sec:large-relat-expans}

Note that \cref{lem_sublinear_by_large_overlap} cannot be applied if
the length of the expansion overlap is too small. We resolve this in
the next theorem, in which the required length of the expansion
overlap is only relative to $α$-relevant distance between $s$ and
$t$, i.e., the distance without the first few cheap steps around $s$ and
$t$. Additionally, we say that $b^+$ is the \emph{highest expansion
  between $s$ and $t$} if it is the smallest number, such that there
is no sequence of exploration steps that is more than $b^+$-expanding
from $s$ or $t$.

\begin{restatable}{theorem}{lemSublinearWithCheapStart}
% \begin{theorem}
\label{lem_sublinear_with_cheap_start}%{{{
  For parameters $0 \le α < 1$ and $b > 1$, let $s, t ∈ V$ be a
  start--destination pair with a $b$-expansion overlap of size at least
  $c \cdot d_α(s,t)$ for some constant $c>0$ and assume that $b^+ \ge b$ is
  the highest expansion between $s$ and $t$.  Then
  $c_\mathrm{bi}(s,t) ∈ \tilde{\mathcal{O}}\left(m^{1-ε}\right)$ for
  $ε = \frac{c(1-α)}{\log_{b}(b^+)+c} > 0$.
\end{restatable}
% \end{theorem}%}}}

Note that \cref{lem_sublinear_with_cheap_start} does not require
$\expan_t > \cheap_s$ or $\expan_s < \cheap_t$, i.e., the expansion
overlap may intersect the cheap prefix and suffix.  Before extending
this result to an even wider regime, we want to briefly mention a
simple corollary of the theorem, in which we consider vertices with an
expansion overlap region and polynomial degree.

\begin{corollary}\label{cor_eps}
  For parameter $b>1$, let $s, t ∈ V$ be a start--destination pair with a
  $b$-expansion overlap of size at least $c \cdot d(s,t)$ for a constant
  $0<c≤1$. Further, assume that $\deg(t) ≤ \deg(s) ≤ m^δ$ for a constant
  $δ∈(0,1)$ and that $b^{+}$ is the highest expansion between $s$ and $t$. Then
  $c_\mathrm{bi}(s,t) ∈ \tilde{\mathcal{O}}\left(m^{1-\frac{c(1-δ)}{\log_{b}(b^+)+c}}\right)$.
\end{corollary}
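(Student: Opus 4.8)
The plan is to derive this directly from \cref{lem_sublinear_with_cheap_start} by exhibiting a suitable value of $\alpha$ so that the $\alpha$-relevant distance $d_\alpha(s,t)$ is at least a constant fraction of the full distance $d(s,t)$, and then plugging in. The key observation is that if $\deg(v) \le m^\delta$, then the very first exploration step from $v$ already has cost $c_v(1) = \deg(v) \le m^\delta$, so for any $\alpha > \delta$ the step $\cheap_v(\alpha)$ is well-defined and is at least $1$; in particular the cheap prefix and suffix each consist of at least one step but, more usefully, the point is that we do \emph{not} lose much distance to them. More precisely, I want to argue that $d_\alpha(s,t) = \cheap_t - \cheap_s - 1 \ge d(s,t) - O(\cheap_s) - O(d(s,t) - \cheap_t)$, and bound the number of cheap steps. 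Since each of the first $\cheap_s$ steps from $s$ costs at most $m^\alpha$ in total and each such step costs at least $1$, we only get $\cheap_s \le m^\alpha$ as a trivial bound, which is not strong enough on its own; instead I should use that inside the $b$-expansion overlap the cost grows by a factor $b>1$ each step, so the overlap — which has length $\ge c\cdot d(s,t)$ — forces $d(s,t) \le O(\log_b m)$ via \cref{lem_expansion_dist_small} (applied as in the proof of \cref{lem_sublinear_by_large_overlap}), and then $\cheap_s$ being at most logarithmic-ish is fine because everything is poly-logarithmic and absorbed by $\tilde{\mathcal O}$.

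Concretely, the steps I would carry out are: (1) fix $\alpha = \delta + \epsilon_0$ for a suitably small constant $\epsilon_0 > 0$ with $\alpha < 1$; by the degree hypothesis, $c_s(1) = \deg(s) \le m^\delta < m^\alpha$ and likewise $c_t(d(s,t)) = \deg(t) \le m^\delta < m^\alpha$, so both $\cheap_s(\alpha)$ and $\cheap_t(\alpha)$ are defined. (2) Observe that the hypothesis gives a $b$-expansion overlap of length at least $c\,d(s,t)$, hence either $\expan_s \ge d(s,t)/2$ or $\expan_t \le d(s,t)/2$, so by \cref{lem_expansion_dist_small} we have $d(s,t) \le 2\log_b(2m)$. (3) Conclude that $d(s,t) = O(\log m)$, so in particular $d_\alpha(s,t) = O(\log m)$ as well, and — this is the crux — that the $b$-expansion overlap of size $\ge c\,d(s,t)$ is still of size $\ge c'\, d_\alpha(s,t)$ for some constant $c' > 0$: indeed since $d_\alpha(s,t) \le d(s,t)$, the overlap size $c\,d(s,t) \ge c\, d_\alpha(s,t)$, so we may simply take $c' = c$. (4) Apply \cref{lem_sublinear_with_cheap_start} with this $\alpha$ and the constant $c' = c$, obtaining $c_\mathrm{bi}(s,t) \in \tilde{\mathcal O}(m^{1-\epsilon})$ with $\epsilon = \frac{c(1-\alpha)}{\log_b(b^+) + c}$. (5) Finally, let $\epsilon_0 \to 0$: since the exponent $1-\epsilon$ depends continuously on $\alpha$ and we only need the bound to hold for \emph{some} admissible $\alpha > \delta$, we get, for every $\alpha \in (\delta,1)$, the bound $\tilde{\mathcal O}(m^{1-\frac{c(1-\alpha)}{\log_b(b^+)+c}})$, and taking the supremum (which is not attained but gives an arbitrarily good $\tilde{\mathcal O}$-exponent) yields the claimed $\tilde{\mathcal O}(m^{1-\frac{c(1-\delta)}{\log_b(b^+)+c}})$ — with the understanding, standard in this paper, that the $\tilde{\mathcal O}$ hides the slack.

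The main obstacle I anticipate is step (5): strictly speaking, \cref{lem_sublinear_with_cheap_start} is only applicable for $\alpha > \delta$, so one never literally gets $\alpha = \delta$, and the cleanest fix is either to note that for \emph{any} fixed $\delta' > \delta$ we obtain the bound with $\delta'$ in place of $\delta$ — which the corollary's statement should be read as asserting up to the usual $\tilde{\mathcal O}$ slack — or to redo the optimization of \cref{lem_sublinear_with_cheap_start} directly with $\alpha = \delta$, checking that the single first step of cost exactly $\deg(v) \le m^\delta$ can be treated as the cheap prefix/suffix (so $\cheap_s \ge 1$, $\cheap_t \le d(s,t)-1$, hence $d_\delta(s,t) \ge d(s,t) - 2 \ge c\,d(s,t)/(\text{const})$ once $d(s,t)$ is large, and absorbing the small-$d(s,t)$ case into constants). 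A secondary, more routine point is making sure the $O(\log m)$ bound on $d(s,t)$ is genuinely available here — it is, because the existence of any $b$-expansion overlap already forces it, exactly as used at the start of the proof of \cref{lem_sublinear_by_large_overlap}.
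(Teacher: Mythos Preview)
Your approach is essentially the paper's: apply \cref{lem_sublinear_with_cheap_start} with $\alpha = \delta$, noting that the overlap of size $\ge c\,d(s,t) \ge c\,d_\alpha(s,t)$ immediately gives the stated exponent. However, your step (5) and the entire $\epsilon_0 \to 0$ manoeuvre are unnecessary: the definition of $\cheap_v(\alpha)$ requires only $c_v([1,\cheap_v]) \le m^\alpha$ (non-strict), so $\deg(s),\deg(t) \le m^\delta$ already makes $\cheap_s(\delta) \ge 1$ and $\cheap_t(\delta) \le d(s,t)$ well-defined, and the theorem applies directly at $\alpha = \delta$---no limiting argument, and no detour through \cref{lem_expansion_dist_small}, is needed.
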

This follows directly from \cref{lem_sublinear_with_cheap_start}, using
$\cheap_s(δ)$ and $\cheap_t(δ)$.

\subsection{Small or Non-Existent Expansion Overlap}
\label{sec:small-or-no-expan}

\cref{lem_sublinear_with_cheap_start} is already quite flexible, as it only
requires an expansion overlap with constant length relative to the distance
between $s$ and $t$, minus the lengths of a cheap prefix and suffix. In this
section, we weaken these conditions further, obtaining a tight criterion for
polynomially sublinear running time. In particular, we relax the length
requirement for the expansion overlap as far as possible. Intuitively, we
consider the case in which the cheap prefix and suffix cover almost all the
distance between start and destination. Then, the cost occurring between prefix
and suffix can be small enough to stay sublinear, regardless of whether there
still is an expansion overlap or not.

In the following we first examine the sublinear case, before
constructing a family of graphs with linear running time for the
other case and putting together the complete dichotomy in
\cref{thm_dichotomy}. We begin by proving an upper bound for the
length of low-cost sequences, such as $[1, \cheap_{s}]$ and
$[\cheap_t, d(s,t)]$.
\begin{restatable}{lemma}{lemExpSequenceShort}
%  \begin{lemma}
\label{lem_exp_sequence_short}
    Let $v$ be a vertex with a $b$-expanding sequence $S$ starting at
    $v$ with cost $c_{v}(S) ≤ C$.  Then $|S| ≤ \log_b(C)+1$.
%\end{lemma}
\end{restatable}

\begin{figure}[t]
    \centering
    \includegraphics[width = \textwidth]{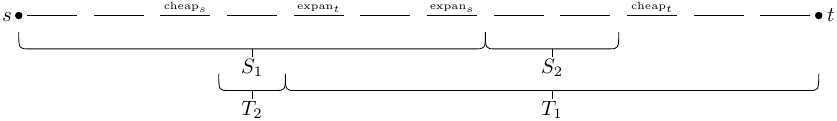}
    \caption{Visualization of exploration steps and (lengths of) sequences
      relevant for \cref{lem:old_claim:small_overlap,lem_sublinear_abc}}
    \label{fig_seq_expl_steps_rho}
\end{figure}

This statement is used in the following technical lemma that is needed to prove
sublinear running times in the case of small expansion overlap. Recall from
\cref{sec:expand-search-spac-basic-prop} that
$ρ_{s,t}(α,b) = \frac {\max\{S_2, T_2\} } {\min\{S_1, T_1\}}$; also
%$ρ_{s,t}(α,b) = \frac {\max\{l^s_2, l^t_2\} } {\min\{l^s_1, l^t_1\}}$; also
see \cref{fig_seq_expl_steps_rho}.

\begin{restatable}{lemma}{lemOldClaimSmallOverlap}
%\begin{lemma}
\label{lem:old_claim:small_overlap}
For parameters $0 \le α < 1$ and $b>1$, let $s, t ∈ V$ be a start--destination
pair and assume that $b^+$ is the highest expansion between $s$ and $t$ and
$\rho_{s,t}(α, b) < \frac{1-α}{1-α+α\log_b(b^+)}$. There are constants $c > 0$
and $k$ such that if the size of the $b$-expansion overlap is less than
$c \cdot \log_{b}(m) - k$, then there is a constant $x < 1$ such that
$c_{s}([1, \cheap_s + T_2]) ≤ 2^{1-α} \cdot m^x$ and
$c_{t}([\cheap_{t} - S_{2}, d(s,t)]) ≤ 2^{1-α} \cdot m^x$.
%\end{lemma}
\end{restatable}
This lets us prove the sublinear upper bound.

\begin{lemma}\label{lem_sublinear_abc}
    For parameters $0 \le α < 1$ and $b>1$, let $s, t ∈ V$ be a start--destination
    pair and assume that $b^+$ is the highest expansion between $s$
    and $t$.
    If $\rho_{s,t}(α, b) < \frac{1-α}{1-α+α\log_b(b^+)}$, then $c_\mathrm{bi}(s,t) ∈
    \tilde{\mathcal{O}}\left(m^{1-ε}\right)$ for a constant $ε>0$.
\end{lemma}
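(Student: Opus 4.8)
The plan is to split on the size $d_{\mathrm{overlap}} = \expan_s - \expan_t + 1$ of the $b$-expansion overlap, reusing \cref{lem_sublinear_by_large_overlap} when it is large and \cref{lem:old_claim:small_overlap} when it is small. Let $c_0>0$ and $k_0$ be the constants provided by \cref{lem:old_claim:small_overlap}. First I would argue that, after possibly shrinking $c_0$, whenever $d_{\mathrm{overlap}} < c_0\log_b m - k_0$ the bound of \cref{lem:old_claim:small_overlap} is of the form $\le 2^{1-\alpha}m^{x_0}$ for a fixed constant $x_0 < 1$ with moreover $x_0 + \frac{c_0}{2}\log_b(b^+) < 1$. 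Plugging $T_2 \le \frac{\rho}{1-\rho}(d_{\mathrm{overlap}} + \cheap_s)$ and $\cheap_s = \mathcal{O}(\log m)$ into $x = \alpha + T_2\log_m(b^+) - (1-\alpha)\log_m 2$ and letting $c_0 \to 0$, the requirement $x_0 + \frac{c_0}{2}\log_b(b^+) < 1$ reduces in the limit to $\alpha < \frac{1-\rho}{1-\rho+\rho\log_b(b^+)}$, which is an equivalent form of the hypothesis $\rho < \frac{1-\alpha}{1-\alpha+\alpha\log_b(b^+)}$; hence such a $c_0$ exists.

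For the large case, if $d_{\mathrm{overlap}} \ge c_0\log_b m - k_0$ then for all but finitely many $m$ the overlap has size at least $\frac{c_0}{2}\log_b m > 0$, so \cref{lem_sublinear_by_large_overlap} immediately gives $c_\mathrm{bi}(s,t) = \tilde{\mathcal{O}}(m^{1-c_0/4})$, while $c_\mathrm{bi}(s,t) \le 2m$ handles the remaining small $m$. For the small case, $d_{\mathrm{overlap}} < c_0\log_b m - k_0$ and \cref{lem:old_claim:small_overlap} yields $c_s([1, \cheap_s + T_2]), c_t([\cheap_t - S_2, d(s,t)]) \le 2^{1-\alpha}m^{x_0}$; note $\cheap_s + T_2 = \expan_t - 1$ and $\cheap_t - S_2 = \expan_s + 1$. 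I then fix the alternation strategy meeting in the middle $k_{\mathrm{mid}}$ of the overlap sequence $[\expan_t, \expan_s]$. If the overlap is non-positive this sequence is empty, one meets at $k = \expan_s$ instead, and the two displayed bounds already cover the full forward cost $c_s([1,\expan_s])$ and backward cost $c_t([\expan_s+1,d(s,t)])$, so the strategy costs $\mathcal{O}(m^{x_0})$. If the overlap is positive, the forward search additionally traverses $[\expan_t, k_{\mathrm{mid}}]$, a subsequence of the $b$-expanding prefix of $s$, whose cost is at most $\frac{b}{b-1}c_s(k_{\mathrm{mid}})$ by \cref{lemma_fraction_last_layer}; since no step is more than $b^+$-expanding, $c_s(k_{\mathrm{mid}}) \le (b^+)^{\lceil d_{\mathrm{overlap}}/2\rceil}c_s(\expan_t-1) \le (b^+)^{\lceil d_{\mathrm{overlap}}/2\rceil}2^{1-\alpha}m^{x_0}$, and symmetrically for the backward search. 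Summing and using $d_{\mathrm{overlap}} < c_0\log_b m$, the strategy costs $\mathcal{O}(m^{x_0}(b^+)^{d_{\mathrm{overlap}}/2}) = \mathcal{O}(m^{x_0 + (c_0/2)\log_b(b^+)})$, which is $\mathcal{O}(m^{1-\varepsilon'})$ for a constant $\varepsilon'>0$ by the choice of $c_0$.

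To conclude I would pass to the balanced strategy via \cref{lemma_balanced_good}, losing a factor $d(s,t)$, and note $d(s,t) = \mathcal{O}(\log m)$: writing $d(s,t) = \cheap_s + d_\alpha(s,t) + (d(s,t) - \cheap_t + 1)$ and $d_\alpha(s,t) = T_2 + d_{\mathrm{overlap}} + S_2$, in the regime at hand $\cheap_s$ and the cheap-suffix length are $\mathcal{O}(\log m)$ by \cref{lem_exp_sequence_short}, $d_{\mathrm{overlap}}$ is $\mathcal{O}(\log m)$ by assumption, and $S_2, T_2$ are $\mathcal{O}(\log m)$ by the bound on $\max\{S_2,T_2\}$ established inside the proof of \cref{lem:old_claim:small_overlap}. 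Hence $c_\mathrm{bi}(s,t) = \tilde{\mathcal{O}}(m^{1-\varepsilon})$ with $\varepsilon = \min\{c_0/4,\varepsilon'\}$.

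The step I expect to be the main obstacle is making the two halves of the case distinction mesh: the exponent $x$ from \cref{lem:old_claim:small_overlap} degrades to $1$ as $d_{\mathrm{overlap}}$ approaches its threshold, while pessimistically treating the short (possibly empty) overlap region as fully $b^+$-expanding costs an extra factor $m^{\Theta(d_{\mathrm{overlap}})}$; one must pick the ``small-overlap'' threshold small enough --- and the strict inequality on $\rho$ is precisely what leaves the needed room --- so that the sum of the two exponent contributions stays bounded below $1$ uniformly in $m$. Everything else is routine bookkeeping with \cref{lemma_fraction_last_layer,lemma_balanced_good,lem_exp_sequence_short}.
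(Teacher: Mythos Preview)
Your proposal is correct and follows the paper's overall architecture: split on the size of the expansion overlap, invoke \cref{lem_sublinear_by_large_overlap} when it is $\Omega(\log m)$, and use \cref{lem:old_claim:small_overlap} otherwise. The difference lies in how the small-overlap branch is finished.

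The paper, after obtaining $c_s([1,\cheap_s+T_2]),\,c_t([\cheap_t-S_2,d(s,t)])\in O(m^x)$, observes that this lets one \emph{upgrade} the cheap parameter: $\cheap_s(\alpha)+T_2\le\cheap_s(x)$ and $\cheap_t(\alpha)-S_2\ge\cheap_t(x)$, so the $x$-relevant distance is at most $d_{\mathrm{overlap}}$. Hence the overlap covers the entire $x$-relevant distance and \cref{lem_sublinear_with_cheap_start} applies with $c=1$. You instead avoid re-invoking \cref{lem_sublinear_with_cheap_start} altogether and pay for the short middle portion $[\expan_t,k_{\mathrm{mid}}]$ directly, bounding it by $(b^+)^{d_{\mathrm{overlap}}/2}$ times the last cheap layer; this forces you to shrink $c_0$ so that the two exponent contributions $x_0$ and $\frac{c_0}{2}\log_b(b^+)$ add up to something below~$1$, and you correctly identify that the strict inequality on $\rho$ is exactly what gives the room for this. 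Your route is a bit more hands-on but also more self-contained, and it makes explicit two points the paper glosses over: the uniformity of the exponent across the case boundary, and the $d(s,t)=O(\log m)$ bound needed to absorb the factor from \cref{lemma_balanced_good} in the small-overlap case. Either argument is fine.
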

\begin{proof}%{{{
  We make a case distinction on the size of the expansion overlap.
  First, we note that by \cref{lem_sublinear_by_large_overlap}, the
  bidirectional search has sublinear running time if the
  $b$-expansion overlap has size $\Omega(\log m)$.

  For the other case, we consider an expansion overlap of size less
  than $c \cdot \log_{b}(m) -k$, for constants $c > 0$ and $k$ as in
  \cref{lem:old_claim:small_overlap}. We analyze the cost of doing
  $\cheap_s + T_2$ exploration steps in the forward search and,
  symmetrically, $(d(s,t) - \cheap_t + 1) + S_2$ in the backward
  search. By \cref{lem:old_claim:small_overlap}, these sequences have
  sublinear cost, as there is an $x<1$ such that
  $c_{s}([1, \cheap_s + T_2]) ≤ 2^{1-α} \cdot m^x$ and
  $c_t([\cheap_t-S_2, d(s,t)]) ≤ 2^{1-α} \cdot m^x$. Without loss of
  generality, assume $S_{1} \ge T_{1}$. We again consider two cases.

  If $\expan_{s} > \cheap_{s} + T_{2}$, the
  expansion-overlap is reached after the considered sequences of
  exploration steps. As the cost of these sequences is in $O(m^x)$, we
  have $\cheap_{s}(α) + T_{2} \le \cheap_{s}(x)$,
  $\cheap_{t}(α) - S_{2} \ge \cheap_{t}(x)$ and the size of the
  expansion-overlap is at least the $x$-relevant distance between $s$
  and $t$, which gives sublinear running time according to
  \cref{lem_sublinear_with_cheap_start}. In the other case we have
  $\expan_{s} \le \cheap_{s} + T_{2}$. Thus, the considered sequences
  of exploration steps overlap, as
  $\cheap_{s} + T_{2} + 1 \ge \cheap_{t} - S_{2}$. By considering the
  cost under an assumed meeting point at the end of one of the
  considered sequences, sublinear running time for the entire
  bidirectional search follows.
\end{proof}%}}}

%This lemma will be used to show sublinear running time for cases not covered by \cref{lem_sublinear_by_large_overlap}.
%Even for big $k$ and tiny $ε>0$ we have $k\cdot m^x < m^{x+ε}$ if $m$ is above a sufficiently large constant that depends only on $ε$ and $k$.
%This means that the lemma also implies sublinear running times if we allow a multiplicative factor in the maximum cost of the fringes.

The following lemma covers the other side of the dichotomy, by
proving a linear lower bound on the running time for the case where
the conditions on $\rho$ in \cref{lem_sublinear_abc} are not met. The
proof can be found in \cref{sec:omitted-proofs}, but the rough idea
is the following. We construct symmetric trees of depth $d$ around
$s$ and $t$. The trees are $b$-expanding for $(1-ρ)d$ steps and
$b^{+}$-expanding for subsequent $ρd$ steps and are connected at
their deepest layers.

\begin{restatable}{lemma}{lemLinearExample}
  \label{lem_linear_example}
    For any choice of the parameters $0<α<1$, $b^+>b>1$, $ρ_{s, }(α,b) ≥
    \frac{1-α}{1-α+α\log_{b}(b^+)}$ there is an infinite family of
    graphs with two designated vertices $s$ and $t$, such that in the
    limit $\cheap_s(α)$, $\cheap_t(α)$, $\expan_s(b)$, and
    $\expan_t(b)$ fit these parameters, $b^+$ is the highest
    expansion between $s$ and $t$ and $c_\mathrm{bi}(s,t) ∈ Θ(m)$.
\end{restatable}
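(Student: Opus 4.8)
The plan is to exhibit, for every target $\rho\ge\rho^\ast:=\frac{1-\alpha}{1-\alpha+\alpha\log_b(b^+)}$, a family of symmetric ``double trees'' indexed by a growing parameter $D$ on which the bidirectional BFS must touch a constant fraction of all $m$ edges. Concretely I would take two isomorphic rooted trees $T_s,T_t$ of depth $D/2$, rooted at $s$ and $t$, and glue them along their deepest layers so that $d(s,t)=D$; the per-step exploration cost (the degree sum of a layer) is made to grow by exactly $b$ for the first few layers, then by a factor in $[1,b)$ for one ``breaking'' layer, and then by exactly $b^+$ up to depth $D/2$. The breaking layer ends the $b$-expanding prefix exactly where we want, so $\expan_s$ (and, by symmetry, $\expan_t$) sits at the end of the $b$-part; no step is more than $b^+$-expanding while the factor $b^+$ is attained, so $b^+$ is the highest expansion; and $\cheap_s,\cheap_t$ come for free as the layers at which the accumulated cost from $s$ resp.\ $t$ first reaches $m^\alpha$. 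Since $[1,\expan_s]$ is $b$-expanding with $\expan_s=\Theta(D)$, \cref{lem_expansion_dist_small} gives $D=\Theta(\log m)$, so all four landmark steps scale linearly in $D$.

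The genuinely free quantities are the lengths of the $b$-part and the $b^+$-part as functions of $D$, and they are pinned down by relations that are affine in $D$: the two lengths determine $\log_b m$ (the bottom layers dominate, so $\log_b m$ is the corresponding convex combination of $1$ and $\log_b(b^+)$); the length of the $b$-part equals $\expan_s=\expan_t$; and the point where the accumulated cost reaches $m^\alpha$ gives $\cheap_s$, hence also $\cheap_t$. Substituting into $\rho_{s,t}(\alpha,b)=\frac{\max\{S_2,T_2\}}{\min\{S_1,T_1\}}$, which for a symmetric instance collapses to a single ratio, yields one scalar equation tying the region lengths to $\rho$. I would then solve the system, check that it has a solution with all lengths positive, $\cheap_s\le D/2$, and $m\to\infty$, and verify that such a solution exists precisely for $\rho\ge\rho^\ast$. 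The threshold being exactly $\rho^\ast$ is no accident: for $\rho<\rho^\ast$ a linear instance is already excluded by \cref{lem_sublinear_abc}, and on the other side the inequality that produces $\rho^\ast$ in \cref{lem:old_claim:small_overlap} reappears here as the condition under which the cheap prefix is allowed to swallow the whole $b$-expanding prefix.

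For the lower bound I would argue directly. By symmetry $c_t([k+1,D])=c_s([1,D-k])$, so if the bidirectional search meets after $k$ forward and $D-k$ backward steps, its cost is $g(k)+g(D-k)$ with $g(k):=c_s([1,k])$ nondecreasing (the tree only grows until the glued layer). Hence $g(k)+g(D-k)\ge g(D/2)$ for every $k$, and $g(D/2)=c_s([1,D/2])=m-\frac{1}{2}\gamma$, where $\gamma$ is the degree sum of the glued middle layer; since $b^+>1$ is constant, $\gamma$ stays bounded away from $2m$, so $c_s([1,D/2])=\Theta(m)$. Finally, the two searches share a vertex only in the glued middle layers, so they first intersect after exactly $D$ exploration steps; thus $c_\mathrm{bi}(s,t)=g(k^\ast)+g(D-k^\ast)\ge g(D/2)=\Omega(m)$ for the split $k^\ast$ the balanced strategy produces, while $c_\mathrm{bi}(s,t)\le 4m$ trivially, so $c_\mathrm{bi}(s,t)\in\Theta(m)$.

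The main obstacle is the middle step: forcing all four landmarks $\cheap_s,\cheap_t,\expan_s,\expan_t$ to land at prescribed positions at once requires careful bookkeeping of how $\log m$, the accumulated cost, and the two expansion factors interact, followed by a tight analysis of which $\rho$ the system realizes --- in particular the boundary case $\rho=\rho^\ast$, where $\cheap_s$ coincides with $\expan_s$. The subsidiary geometric chores (making degree sums grow by a not-necessarily-rational factor, and gluing the deepest layers without creating a step of expansion exceeding $b^+$ or shortening $d(s,t)$) are routine but are where the ``in the limit'' in the statement earns its keep.
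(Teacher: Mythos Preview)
Your overall architecture---two isomorphic trees glued at their deepest layers, with a $b$-expanding prefix and a $b^+$-expanding suffix on each side, and a convexity argument showing any meeting point forces the search through a layer of cost $\Theta(m)$---is exactly the paper's. The linear-cost argument is fine and matches the paper's.

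The genuine problem is the ``breaking layer''. You insert a step of growth $<b$ between the $b$-part and the $b^+$-part so that $\expan_s$ sits at the end of the $b$-part. The paper does \emph{not} do this: since $b^+>b$, the entire half-tree is $b$-expanding, so $\expan_s$ lands at the full depth $d$ (the gluing layer), and it is $\cheap_s$ that sits at the $b$/$b^+$ transition $d_1$. With $\expan_s=d$ one gets $S_1=T_1=d$, $S_2=T_2=d_2$, hence $\rho=d_2/d$ directly; the single remaining condition $c_s([1,d_1])\le m^{\alpha}$ then reduces to $d_2/d_1\ge (1-\alpha)/(\alpha\log_b b^+)$, which is exactly $\rho\ge\rho^\ast$. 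So the paper's parametrization is one line.

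Your breaking layer shrinks $S_1$ from $d$ down to the length $k$ of the $b$-part while leaving the gap $\cheap_t-\expan_s$ essentially at $D-k-\cheap_s$. Working this out (with $r=\ell/k$ the ratio of the $b^+$-part to the $b$-part, and $\cheap_s\approx\alpha\log_b m$), one finds $\rho\to 1-\alpha+(2-\alpha\log_b b^+)\,r$ as $k\to\infty$, so the smallest $\rho$ you can realize is $\min\{1-\alpha,\;2(1-\alpha)/(\alpha\log_b b^+)\}$. Both quantities are strictly larger than $\rho^\ast=(1-\alpha)/(1-\alpha+\alpha\log_b b^+)$, so your family never reaches the interval $[\rho^\ast,1-\alpha)$ and the lemma is not covered there. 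The fix is simply to drop the breaking layer and let $\expan_s$ be the full half-depth; then the ``system'' you anticipate collapses to the single equivalence above, and the boundary case $\rho=\rho^\ast$ corresponds to $\cheap_s$ coinciding with the $b$/$b^+$ transition rather than with $\expan_s$.
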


This lets us state a complete characterization of the worst case running time
of bidirectional BFS depending on $ρ_{s,t}(α,b)$.  It follows directly from
\cref{lem_sublinear_abc} and \cref{lem_linear_example}.

\begin{theorem}\label{thm_dichotomy}
  Let an instance $(G, s, t)$ be a graph with two designated vertices, let
  $b^{+}$ be the highest expansion between $s$ and $t$ and let $0<α<1$ and $b>1$
  be parameters. For a family of instances we have
  $c_{\mathrm{bi}}(s,t) ∈ \mathcal{O}(m^{1-ε})$ for some constant $ε>0$ if
  $ρ_{s,t}(α,b) < \frac{1-α}{1-α+α\log_b(b^+)}$ and
  $c_{\mathrm{bi}}(s,t) ∈ Θ(m)$ otherwise.
%For parameters $α$ and $b$ let $s, t∈ V$ be a start--destination pair and
%assume that $b^+$ is the highest expansion between $s$ and~$t$. Then
%  \begin{itemize}
%    \item $c_\mathrm{bi}(s,t) ∈ \tilde{\mathcal{O}}(m^{1-ε})$ for some
%          constant $ε$, if
%          $ρ_{s,t}(α,b) < \frac{1-α}{1-α+α\log_b(b^+)}$,
%    \item otherwise, there is an infinite family of instances on which
%          $c_\mathrm{bi}(s,t) ∈ Θ(m)$.\todo{reviewer 1: the vertices $s$ and
%          $t$ in $c_\mathrm{bi}(s,t)$ of the constructed instances are not the
%          same $s$ and $t$ as in the first part of the theorem}
%  \end{itemize}
\end{theorem}
% \begin{proof}
% Follows directly from \cref{lem_sublinear_abc} and
%     \cref{lem_linear_example}.
%     \todo{lohnt es sich hierfür ein beweis env auf zu machen?}
% \end{proof}

\section{Evaluation}\label{sec:eval}
%One of the motivations for this work was to better understand the running times of bidirectional search in practice.

We conduct experiments to evaluate how well our concept of expansion
captures the practical performance observed on real-world networks.
For this, we use a collection of 3006 networks selected from Network
Repository~\cite{blasius_thomas_2022_6586185,networkrepository}. The
data-set was obtained by selecting all networks with at most
\SI{1}{M} edges and comprises networks from a wide range of domains
such as social-, biological-, and infrastructure-networks. Each of
these networks was reduced to its largest connected component and
multi-edges, self-loops, edge directions and weights were ignored.
Finally, only one copy of isomorphic graphs was kept. The networks
have a mean size of $12386$ vertices (median $522.5$) and are mostly
sparse with a median average degree of $5.6$. More statistics on the
networks can be found in the appendix.

\subsection{Setup \& Results}\label{sec:eval:results}
For each graph, we randomly sample 250 start--destination pairs
$s, t$.  We measure the cost of the bidirectional
search as the sum of the degrees of explored vertices.  For each graph
we can then compute the average cost $\hat{c}$ of the sampled pairs.
Then, assuming that the cost behaves asymptotically as $\hat c = m^x$
for some constant $x$, we can compute the \emph{estimated exponent} as
$x = \log_m \hat{c}$.

We focus our evaluation on the conditions in
\cref{lem_sublinear_with_cheap_start} and \cref{thm_dichotomy}. For this, we
compute $\expan_s(b)$, $\expan_t(b)$, $\cheap_s(α)$, and $\cheap_t(α)$ for each
sampled vertex pair for all values of $\alpha$, by implicitly calculating the
values of $\alpha$ corresponding to cheap sequences of different length.

\begin{figure}[t]
    \centering
    \begin{subfigure}[b]{0.45\textwidth}
        \includegraphics[width=\textwidth]{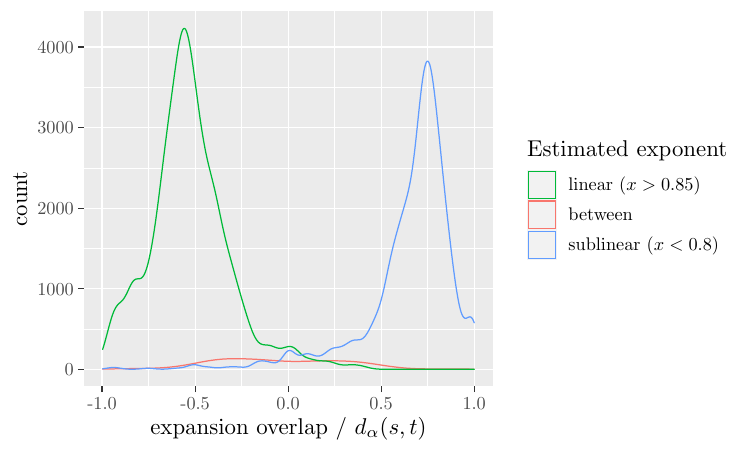}
        \caption{Distribution of parameter $c$ of
        \cref{lem_sublinear_with_cheap_start} for $b=2$ and $\alpha =
        0.1$ for graphs with different estimated exponents.}
        \label{fig_lemma_6_density}
    \end{subfigure}
    ~ %add desired spacing between images, e. g. ~, \quad, \qquad, \hfill etc.
      %(or a blank line to force the subfigure onto a new line)
    \begin{subfigure}[b]{0.45\textwidth}
        \includegraphics[width=\textwidth]{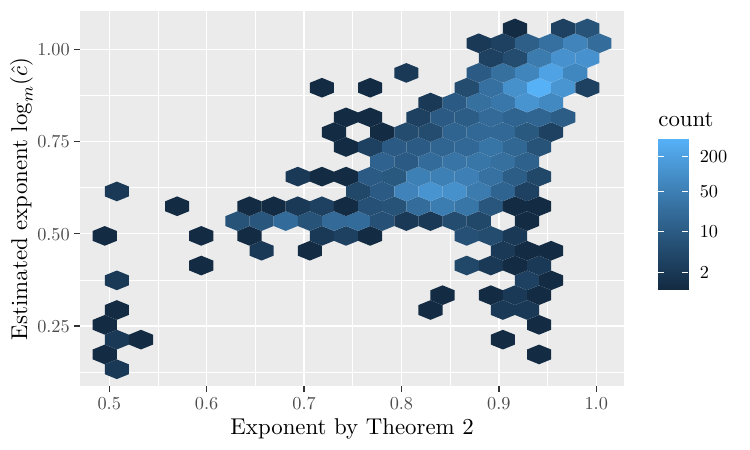}
        \caption{Relationship between estimated exponent and asymptotic exponent
          predicted by \cref{lem_sublinear_with_cheap_start} for $b=2$.}
        \label{fig_lemma_6_exponent}
    \end{subfigure}
    \caption{Empirical validation of \cref{lem_sublinear_with_cheap_start}.}\label{fig_experiments_lem6}
\end{figure}

By \cref{lem_sublinear_with_cheap_start} a vertex pair has
asymptotically sublinear running time, if the length of the expansion
overlap is a constant fraction of the relevant distance $d_α(s, t)$.
We therefore computed this fraction for every pair and
then averaged over all sampled pairs of a graph.  Note that for any
graph of fixed size, there is a value of $α$, such that
$\cheap_s(α) ≥ \cheap_t(α)$.  We therefore set $α≤0.1$ in order to not
exploit the asymptotic nature of the result.  Also we set the minimum
base of the expansion $b$ to $2$.  Outside of extreme ranges, the
exact choice of these parameters makes only little difference, as
discussed in more detail in \cref{sec:appendix-plots}.
\cref{fig_lemma_6_density} shows the distribution of the relative
length of the expansion overlap for different values of the estimated
exponent.  It separates the graphs into three categories; graphs with
estimated exponent $x > 0.85$ ((almost) linear), with $x < 0.8$
(sublinear) and the graphs in between.  We note that the exact choice
of these break points makes little difference; more detailed plots can
be found in \cref{sec:appendix-plots}.

Note that \cref{lem_sublinear_with_cheap_start} states not only
sublinear running time but actually gives the exponent as
$1 - \frac{c(1-α)}{2(\log_{b}(b^+)+c)}$.  \cref{fig_lemma_6_exponent}
shows the relationship between this exponent (averaged over the $(s, t)$-pairs)
and the estimated exponent.  For each sampled pair of vertices we
chose $\alpha$ optimally to minimize the exponent.  This is valid even
for individual instances of fixed size, because even while higher
values of $\alpha$ increase the fraction of the distance that is
included in the cheap prefix and suffix, this increases the predicted
exponent.  % In \cref{fig_lemma_6_exponent} we plot the average
% predicted exponent and observed exponent for each graph.

Finally, \cref{thm_dichotomy} proves sublinear running time if
$\rho_{s,t}(α,b) ≤ \frac{1-α} {1-α+α \log_b(b^+)}$.  To evaluate how well
real-world networks fit this criterion, we computed $\rho_{s,t}(α,b)$
for each sampled pair $(s,t)$ as well as the upper bound
$\rho_\mathrm{max} := \frac{1-α} {1-α+α \log_b(b^+)}$.  Again, choosing
large values for $α$ does not exploit the asymptotic nature of the
statement, as $\rho_\mathrm{max}$ tends to $0$ for large values of
$α$.  For each vertex pair, we therefore picked the optimal value of
$α$, minimizing $\rho_\mathrm{max} - \rho_{s,t}(α,b)$ and recorded the
average over all pairs for each graph.  \cref{fig_exp_thm10} shows the
difference between $1/(1+\rho_{s,t}(α,b))$ and
$1/(1+\rho_\mathrm{max})$.  This limits the range of these values to
$[0,1]$ and is like dividing $S_2$ by $S_1 + S_2$ instead of
$S_2$ by $S_1$ in the definition of $\rho$.

\begin{figure}[t]
    \centering
    \begin{subfigure}[b]{0.45\textwidth}
        \includegraphics[width=\textwidth]{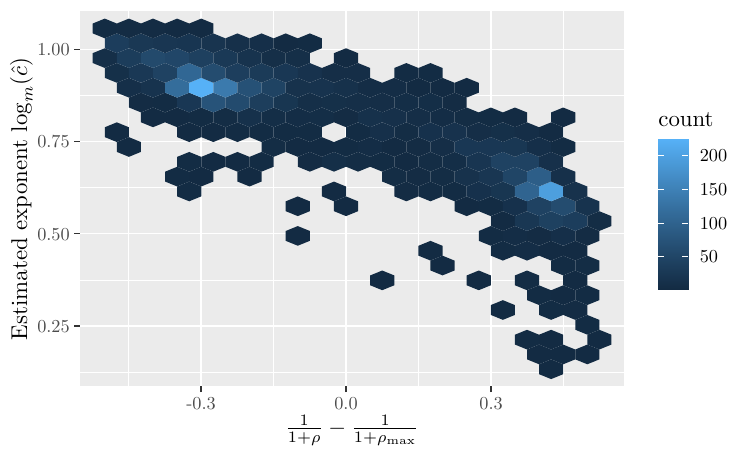}
%        \caption{Relationship between estimated exponent and $δ_ρ = 1/(1+\rho_{s,t}(α,b) - 1/(1+\rho_\mathrm{max}$ for $b=2$. \cref{thm_dichotomy} predicts sublinear running time for all points with $δ_ρ > 0$.}
%        \label{fig_lemma_10_hex}
    \end{subfigure}
    ~ %add desired spacing between images, e. g. ~, \quad, \qquad, \hfill etc.
      %(or a blank line to force the subfigure onto a new line)
    \begin{subfigure}[b]{0.45\textwidth}
        \includegraphics[width=\textwidth]{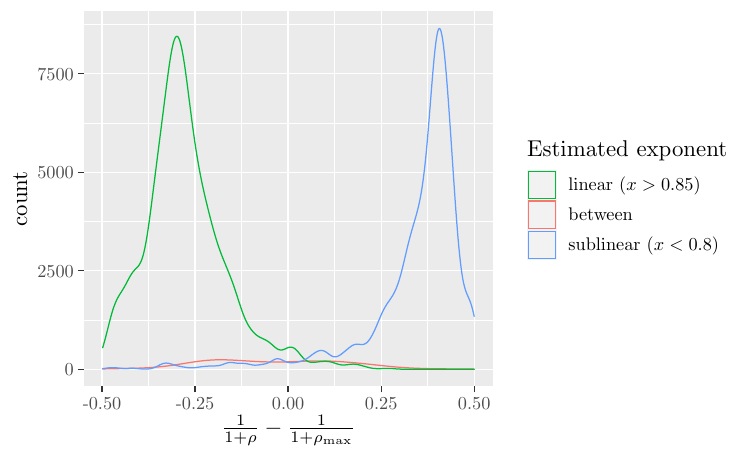}
%        \caption{Relationship between estimated exponent and asymptotic exponent predicted by \cref{lem_sublinear_with_cheap_start} for $b=2$.}
%        \label{fig_lemma_6_exponent}
    \end{subfigure}
    \caption{Relationship between estimated exponent and $δ_ρ = 1/(1+\rho_{s,t}(α,b) - 1/(1+\rho_\mathrm{max})$ for $b=2$. \cref{thm_dichotomy} predicts sublinear running time for all points with $δ_ρ > 0$.}\label{fig_exp_thm10}
\end{figure}

\subsection{Discussion}\label{sec:eval:discussion}

Both \cref{fig_exp_thm10} and \cref{fig_lemma_6_density} show that
our notion of expansion not only covers some real networks, but
actually gives a good separation between networks where the
bidirectional BFS performs well and those where it requires (close
to) linear running time. With few exceptions, exactly those graphs
that seem to have sublinear running time satisfy our conditions for
asymptotically sublinear running time. Furthermore, although the
exponent stated in \cref{lem_sublinear_with_cheap_start} only gives an
asymptotic worst-case guarantee, \cref{fig_lemma_6_exponent} clearly
shows that the estimated exponent of the running time is strongly
correlated with the exponent given in the theorem.

%
% ---- Bibliography ----
%
% BibTeX users should specify bibliography style 'splncs04'.
% References will then be sorted and formatted in the correct style.
%
\bibliographystyle{splncs04}
\bibliography{bib}

\clearpage
\appendix

\section{Appendix}\label{sec:apx}

\subsection{Omitted proofs}\label{sec:omitted-proofs}

\lemFractionLastLayer*
\begin{proof}
    We have $f(i-1) ≤ f(i)/b$ and so we get
    \begin{equation*}
         \frac{f(n)}{\sum_{i=1}^n f(i)}
         ≥ \frac{f(n)}{\sum_{i=0}^n f(n)/b^i}
         = \frac{1}{\sum_{i=0}^n 1/b^i}
         = \frac{1-1/b}{1-1/b^{n+1}}
         = \frac{b-1}{b-1/b^{n+2}}
         ≥ \frac{b-1}{b}.\qedhere
    \end{equation*}
\end{proof}

\begin{remark}\label{lem:no_relax}
  If the definition of $\expan_{s}$ (respectively $\expan_{t}$) is
  relaxed to only require $b$-expansion in the sequence of
  exploration steps $[\cheap_s(α), \expan_{s}]$ (respectively
  $[\expan_t, \cheap_t(α)]$), then we can construct instances with
  logarithmic expansion overlap on which the cost of the
  bidirectional search is linear.
\end{remark}
\begin{proof}
  Assume $\frac12 \le α < 1$ and $b>1$. We construct an instance as sketched below.\\
  \begin{center}
    \includegraphics[]{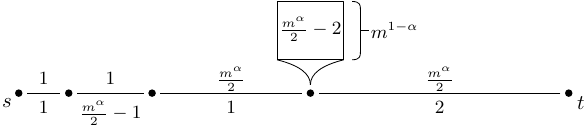}
  \end{center}
  We connect $s$ to the rest of the graph via one layer with cost $1$, a second
  layer with cost $\frac{m^α}{2}-1$ and another $\frac{m^α}{2}$ layers of cost
  $1$ and $t$ via $\frac{m^α}{2}$ layers of cost $2$. Behind these cheap regions
  of cost $m^α$, we append a logarithmic number of $b$-expanding layers followed
  by $Θ(m^{1-α})$ layers of cost $\frac{m^α}{2}-2$. This way, assuming the relaxed
  definition of $\expan_{s}$ and $\expan_{t}$, there is an expansion overlap of
  logarithmic size. However, the balanced alternation strategy will only perform
  one step in the forward direction and instead explore the (individually)
  cheaper layers of cost $2$ and $\frac{m^α}{2}-2$. This leads to linear overall
  cost.
\end{proof}

\lemSublinearWithCheapStart*
\begin{proof}%{{{
  As there is an expansion overlap of length $c \cdot d_α(s,t)$, we
  can apply \cref{lem_sublinear_by_large_overlap} if $d_α(s,t)$ is
  large enough.  Assume that $d_α(s,t) ≥ a \log_b(m)$ for some
  constant $a$ to be determined later.  Then, by
  \cref{lem_sublinear_by_large_overlap} we immediately get a
  sublinear upper bound
    \begin{equation*}
        c_\mathrm{bi}(s,t) ≤ 8\log_{b}(2m) \cdot \frac{b^2}{b-1} \cdot m^{1 - ca/2},
    \end{equation*}
    if we can choose $a$ suitably.

    Otherwise, we have $d_α(s,t) < a \log_b m$, in which case we can find an
    upper bound for $c_{\mathrm{bi}}(s,t)$ by considering the cost for an
    assumed meeting point in the middle between $\cheap_{s}$ and $\cheap_{t}$,
    i.e., after $\cheap_s + \lceil d_α(s,t) / 2 \rceil$ steps of the forward
    search and $(d(s,t) - \cheap_t + 1) + \lfloor d_{α}(s,t)/2 \rfloor$ steps of
    the backward search. Via \cref{lemma_balanced_good} we thus get
    \begin{align*}
      \begin{split}
        c_\mathrm{bi}(s,t) &≤ d(s,t) \cdot
            \left(c_s([1, \cheap_s + \lceil d_α(s,t)/2 \rceil])
            +\right.\\
            &\qquad \left.c_t([\cheap_t - \lfloor d_{α}(s,t)/2 \rfloor, d(s,t)])\right).
      \end{split}\\
      \intertext{
          Pessimistically assuming $c_s(i+1) = c_s(i)
          \cdot b^+$ for $i≥\cheap_s$, we can use
          \cref{lemma_fraction_last_layer} to obtain
      }
      \begin{split}
        &≤ d(s,t) \cdot
            \left(\frac{b^+}{b^+-1} c_s(\cheap_s + d_α(s,t)/2+1)
            +\right.\\
            &\qquad \left.\frac{b^+}{b^+-1} c_t(\cheap_t - d_{α}(s,t)/2)\right)
        \end{split}\\
        \begin{split}
        &≤ d(s,t) \cdot
            \left(\frac{{b^+}^{2}}{b^+-1} c_s(\cheap_s)\cdot {b^+}^{d_α(s,t)/2}
            +\right.\\
            &\qquad \left.\frac{b^+}{b^+-1} c_t(\cheap_t) \cdot { b^+ }^{d_{α}(s,t)/2}\right),
        \end{split}\\
        \intertext{where we apply $c_s(\cheap_s) ≤ c_s([1,\cheap_s]) ≤
        m^α$ and symmetrically $c_t(\cheap_t) ≤ c_t([\cheap_t,d(s,t)]) ≤ m^α$ and $d_α(s,t) < a \log_b m$ to get}
        &≤ d(s,t) \cdot
            \left(2 \cdot \frac{{b^+}^{2}}{b^+-1} m^α\cdot {b^+}^{a \log_b m / 2}\right)\\
        &∈ O\left(
            d(s,t) \cdot
            m^{α+a\log_b(b^+) / 2}
        \right).
    \end{align*}
    In order to find the optimal choice of $a$, we set the two
    exponents from the case distinction to be equal
    \begin{align*}
        1-ca/2 &= α + a \log_{b}(b^+) / 2\\
        1-α &= a \log_{b}(b^+)/2 + ca/2\\
        a &= \frac{2(1-α)}{\log_{b}(b^+) + c}.
    \end{align*}
    Thus we have
    $c_\mathrm{bi}(s,t) ∈ \mathcal{O}\left(d(s,t) \cdot m^{1-\frac{c(1-α)}{\log_{b}(b^+)+c}}\right) \subseteq \tilde{\mathcal{O}}\left( m^{1-\frac{c(1-α)}{\log_{b}(b^+)+c}} \right)$.
\end{proof}%}}}

\lemExpSequenceShort*
\begin{proof}
  We have
  \begin{equation*}
    C ≥ c_{v}(S)
    = \sum_{i=1}^{|S|} c_{v}(i)
    ≥ \sum_{i=0}^{|S|-1} b^i
    ≥ b^{|S|-1}
  \end{equation*}
  % \begin{align*}
  %   C ≥& c(v,R)\\
  %   =& \sum_{i=1}^{|S|} c(v,i)\\
  %   ≥& \sum_{i=0}^{|S|-1} b^i\\
  %   ≥& b^{|S|-1}
  % \end{align*}
  and thus get $|S|≤\log_b(C)+1$.
\end{proof}

\lemOldClaimSmallOverlap*
\begin{proof}
  We write $d_{\mathrm{overlap}}$ for the size of the expansion overlap.
  With \cref{fig_seq_expl_steps_rho} as reference, it is easy to
  verify that $d_{\mathrm{overlap}} = S_{1} - T_{2} - \cheap_{s}$.
  Note that this also holds in the case of
  $d_{\mathrm{overlap}} \le 0$. Without loss of generality assume
  $S_1 ≥ T_1$. Together with the definition of $ρ$ this implies
  $S_1 ρ ≥ T_1 ρ ≥ \max\{S_2, T_{2}\}$. We use $T_{2} \le S_{1} ρ$
  and $S_{1} \ge \frac{\max\{S_{2}, T_{2}\}}{ρ}$ to obtain
  \begin{align*}
    d_{\mathrm{overlap}} &= S_{1} - T_{2} - \cheap_{s}\\
                           &≥ S_{1} (1-ρ) - \cheap_{s}\\
                           &≥ \frac{1-ρ}{ρ} \max\{S_{2}, T_{2}\} - \cheap_{s},
  \end{align*}
  which we rephrase as
  \begin{align}\label{eq:1}
    \max\{S_{2}, T_{2}\} ≤ \frac{ρ}{1-ρ} (d_{\mathrm{overlap}} + \cheap_{s}).
  \end{align}
  This means that a small expansion overlap also implies small $S_2$ and $T_2$.

  We use this to derive a suitable upper bound on the expansion overlap
  that gives an upper bound on $T_{2}$ and $S_{2}$ for which the desired
  sublinear cost follows. As no exploration step is more than
  $b^{+}$-expanding, we have
  \begin{align}\label{eq:thm8:eq1}
    c_s(\cheap_{s} + T_{2}) ≤ m^{α} \cdot {b^+}^{T_2}
  \end{align}
  if we pessimistically assume maximum expansion in every step. Note
  that under the pessimistic assumption of $b^{+}$-expansion, this is
  at least a constant fraction of the cost in
  $c_{s}([1, \cheap_s + T_2])$ and by symmetry also
  $c_t([\cheap_t-S_2, d(s,t)])$. We use \cref{eq:thm8:eq1} and derive
  \begin{align*}
    c_s(\cheap_s + T_2) &≤ m^α \cdot m^{T_2 \cdot \log_m(b^+)}\\
                        &= m^α \cdot m^{T_2 \cdot \log_m(b^+) - (1-α) \cdot \log_m2} \cdot m^{(1-α)\log_m2}\\
                        &= 2^{1-α} \cdot m^{α + T_2 \cdot \log_m(b^+) - (1-α) \cdot \log_m2}.
  \end{align*}
  Clearly, this is sublinear if the exponent of $m$ is smaller than
  $1$. Investigating this, we get
  \begin{align*}
    α + T_{2} \cdot \log_{m}(b^{+}) - (1-α) \log_{m}2 &< 1\\
    T_{2} \log_{m}(b^+) &< 1-α + (1-α) \log_{m}2\\
    T_{2} &< \frac{(1-α)(1+\log_{m} 2)}{\log_{m}(b^+)}\\
    T_{2} &< (1-α) \log_{b^{+}}(2m).
  \end{align*}
  Using $T_{2} ≤ \frac{ρ}{1-ρ} (d_{\mathrm{overlap}} + \cheap_{s})$ from above, we continue with a stricter inequality
  \begin{align*}
    \frac{ρ}{1-ρ} (d_{\mathrm{overlap}} + \cheap_{s}) &< (1-α) \log_{b^{+}}(2m)\\
    d_{\mathrm{overlap}} + \cheap_{s} &< \frac{(1-α)(1-ρ)}{ρ} \log_{b^{+}}(2m)\\
    \intertext{
  and use $α\log_b(2m)+1$ as an upper bound from \cref{lem_exp_sequence_short} for
  $\cheap_s$ to derive a sufficient upper bound on
  $d_{\mathrm{overlap}}$ as
    }
    d_{\mathrm{overlap}} &< \frac{(1-α)(1-ρ)}{ρ} \log_{b^{+}}(2m) - α \log_{b}(2m) - 1\\
    d_{\mathrm{overlap}} &< \left(\frac{(1-α)(1-ρ)}{ρ\log_{b}{b^{+}}} - α \right) \log_{b}(2m) - 1.
  \end{align*}
  Relying on the initial assumption on $\rho$, we verify that the factor before the logarithm is a positive constant
  \begin{align*}
    \frac{(1-α)(1-ρ)}{ρ\log_{b}{b^{+}}} - α &> 0\\
    \frac{1-α}{ρ \log_{b}(b^+)}
    - \frac{ρ(1-α)}{ρ \log_b(b^+)}
    - \frac{αρ\log_b(b^+)}{ρ\log_b(b^+)} &> 0\\
    \frac{
    1-α - ρ(1-α) - αρ\log_b(b^+)
    }{
    ρ \log_{b}(b^+)
    }
                                            &> 0\\
    1 - α &> ρ(1-α) + ρ α \log_b(b^+)\\
    ρ &< \frac{1-α}{1-α + α \log_b(b^+)}.
  \end{align*}
  Thus the condition under which the considered sequences of exploration steps has sublinear cost can be expressed as $d_{\mathrm{overlap}} < c \cdot \log_{b}(m) - k$ for positive constants $c$ and $k$.
\end{proof}

\lemLinearExample*
\begin{proof}%{{{
  We construct such an instances by taking two isomorphic trees $T_s$ and $T_t$
  with roots $s$ and $t$ and connecting their deepest leaves with a matching.
  Let $d_1 + d_2 = d$ be the depth of these trees, with $d ∈ Θ(\log m)$. The
  number of branches at each layer is chosen so that $s$ and $t$ are
  $b$-expanding for $d_1$ steps and then $b^+$-expanding for the remaining $d_2$
  steps.

  In the following, we verify that this construction satisfies our requirements
  asymptotically. First, note that ignoring constant factors for $i≤d_1$ we have
  $c_s(i) ∈ Θ\left({b}^i\right)$ and for $d_1<i≤d_1+d_2$ we have
  $c_s(i) ∈ Θ\left({b}^{d_1} \cdot {b^+}^i\right)$. This means that
  $c_s(d_1+d_2)$ is linear in the total cost of $\sum_{i=1}^{d_1+d_2} c_s(i)$
  and therefore also linear in the total number of edges. This means that the
  most expensive layers are in the middle, just before the two trees meet. As
  there are no shortcuts, both the bidirectional search and the unidirectional
  search have to explore at least one of the two most expensive layers,
  resulting in linear running time overall.
%    As the bidirectional search has to explore these expensive
    %    layers, this gives $c_\mathrm{bi}(s,t) ∈ Θ(m)$.

  We now determine a suitable choice of $d_{1}$ and $d_{2}$.
%  We now show how $d_1$ and $d_2$ need to be chosen for any set of parameters
%  $b, b^+, ρ$ and $α$ such that $ρ ≥ \frac{1-α}{1-α+α\log_{b}(b^+)}$ and the
%  construction works out.
  The idea is that from $s$, we want $d$ (at least) $b$-expanding layers,
  followed by $d_2$ no longer expanding layers, and $d_1$ layers with low cost
  $m^{α}$. In other words, the goal is to have $\expan_s(α,b) = d_1 + d_2$ and
  $\cheap_s(α) = d_1$, and analogously $\expan_t(α,b) = d_1+d_2+1$ and
  $\cheap_t = d_1 + 2 d_2 + 1$. In order to make the construction fit to the
  definitions, we need to ensure that $c_s(\cheap_s) ≤ m^α$ and the length
  $S_2 = T_{2}$ is sufficiently small compared to $S_1 = T_{1}$, that is,
  $d_2 ≤ ρ d$. As the construction is symmetrical, it suffices to check this for
  $s$.

  For the cost of region the first $d_1$ steps we have
  $c_s(\cheap_s) ∈ Θ\left(c_s(d_1)\right) = Θ\left({b}^{d_1}\right)$. Also we
  have
  $m ∈ Θ\left(c_s( d_1+d_2)\right) = Θ\left(b^{d_1}\cdot {b^+}^{d_2}\right) = Θ\left({b}^{d_1 + d_2 \log_{b}(b^+)}\right)$.
  This means that if the exponent of $b^{d_1}$ is at most the exponent of
  $b^{α\left(d_1 + d_2 \log_{b}(b^+)\right)}$, we get
  $c_{s}(\cheap_{s}) \in Θ(m^{α})$. In order to also get
  $c_{s}(\cheap_{s}) ≤ m^{α}$, we additionally need $b^{d_1+d_2 \log_b(b^+)}$ to
  be a sufficiently small fraction of $m$. We ensure this by appending a
  sufficiently long (linear in the size of the construction) path to some vertex
  in layer $d$. This does not asymptotically change the cost of any layer, but
  makes the number of edges in layer $d$ an arbitrarily small fraction of the
  total number of edges. It remains to compare the exponents of $b^{d_1}$ and
  $b^{α\left(d_1 + d_2 \log_{b}(b^+)\right)}$, which let's us derive the
  following requirement on $α$, $b$, and $b^+$ subject to $d_1$ and $d_2$.
    \begin{align*}
        d_1 &≤ α d_1 + α \log_{b}(b^+) d_2\\
        d_1 (1-α) &≤ α \log_{b}(b^+) d_2\\
        \frac{d_2}{d_1} &≥ \frac{1-α}{α \log_{b}(b^+)}.
    \end{align*}
    We now set $d_2 = ρ d$ and $d_1 = (1-ρ) d$.
    Then, this translates to
    \begin{align*}
        \frac{d_2}{d_1}
        = \frac{ρ}{1-ρ}
        ≥ \frac{1-α}{α \log_{b}(b^+)}.
    \end{align*}
    It is easy to verify that $\frac{x}{1-x} ≥ y$ and $x ≥ \frac{y}{1+y}$ are
    equivalent.  This means we get
    \begin{align*}
        ρ ≥ \frac{
            (1-α)/(α \log_{b}(b^+))
        }{
            1+ (1-α) / (α \log_{b}(b^+))
        }
         = \frac{1-α}{1 - α + α \log_{b}(b^+)},
    \end{align*}
    which matches exactly the claimed requirement for $ρ$.  This means
    that for any set of parameters with $ρ ≥
    \frac{1-α}{1-α+α\log_b(b^+)}$, we can construct an instance in
    which $s$ and $t$ fulfil these parameters and in which the
    bidirectional search between $s$ and $t$ has linear cost.
\end{proof}%}}}

\subsection{Supplementary remarks on the experiments}
\label{sec:appendix-plots}

\begin{figure}[t]
    \centering
    \begin{subfigure}[b]{0.45\textwidth}
        \includegraphics[width=\textwidth]{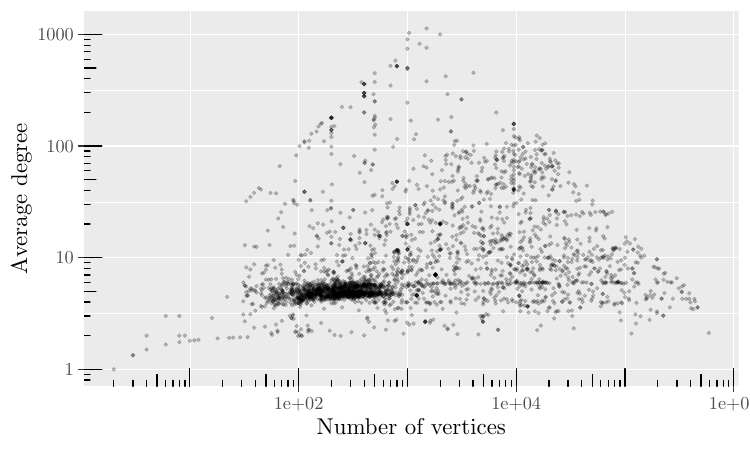}
%        \caption{Relationship between estimated exponent and $δ_ρ = 1/(1+\rho_{s,t}(α,b) - 1/(1+\rho_\mathrm{max}$ for $b=2$. \cref{thm_dichotomy} predicts sublinear running time for all points with $δ_ρ > 0$.}
%        \label{fig_lemma_10_hex}
    \end{subfigure}
    ~ %add desired spacing between images, e. g. ~, \quad, \qquad, \hfill etc.
      %(or a blank line to force the subfigure onto a new line)
    \begin{subfigure}[b]{0.45\textwidth}
        \includegraphics[width=\textwidth]{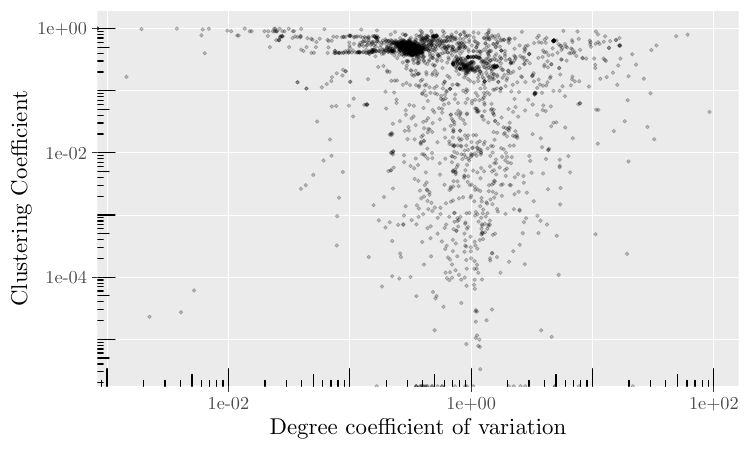}
%        \caption{Relationship between estimated exponent and asymptotic exponent predicted by \cref{lem_sublinear_with_cheap_start} for $b=2$.}
%        \label{fig_lemma_6_exponent}
    \end{subfigure}
  \caption{Overview of network size and average degree as well as
    clustering coefficient and heterogeneity of the degree
    distribution (measured via its coefficient of variation).}
  \label{fig:network_stats}
\end{figure}

First, we include some more statistics on the dataset of chosen real
world networks. The networks have a mean size of $12386$ vertices
(median $522.5$) with a mean average degree of $21.7$ (median: $5.6$)
and median clustering coefficient of $0.45$. See also
\cref{fig:network_stats} for a visual overview of these properties.
The average length of shortest paths varies a lot between the
different networks depending on their structure. We found a mean
average shortest path length of $28.5$ with a median of $5.03$.

In \cref{sec:eval:results} we briefly mentioned that the exact choice of $α$ and $b$ as well as the boundaries for the classification into linear and sublinear running time have little impact on the qualitative nature of our experimental results.
In order to explain this, we consider how choosing different values for $α$ and $b$ affects the results of our experiments.

\begin{figure}
    \centering
    \includegraphics[width = \textwidth]{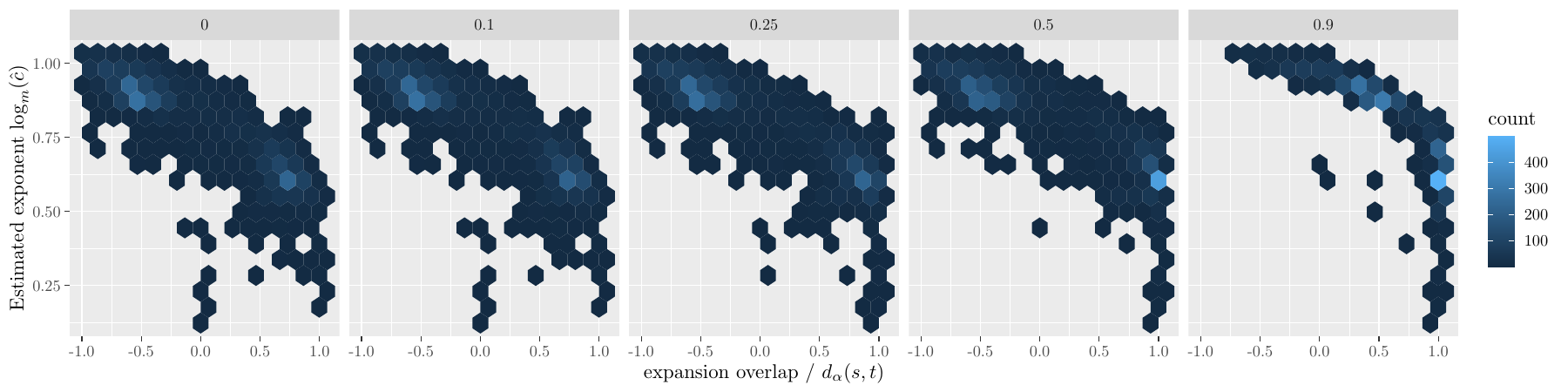}
    \caption{Estimated exponent for parameter $c$ of
      \cref{lem_sublinear_with_cheap_start} under $b=2$ and different values of
      $\alpha$.}
    \label{fig_different_alphas}
\end{figure}

In \cref{fig_different_alphas}, we show how the distribution underlying \cref{fig_lemma_6_density} changes if  we vary $\alpha$.
It can clearly be seen that the overall relationship between how easily the expansion overlap fulfils the condition of \cref{lem_sublinear_with_cheap_start} and the estimated exponent remains intact for all values, even though this inflates the length of the cheap prefix $\cheap_s$ and suffix $d(s,t) - \cheap_t$.
However, this effect only really begins to let apparently linear instances match the condition for sublinear running time for very large values of $α$ like $0.9$.
Contrary for all other depicted values of $α$ the distributions barely change and also the difference between not using the cheap regions at all by setting $α = 0$ and using them is not large.

It can also be seen that the arbitrary choice of when to classify instances as linear or sublinear based on their estimated exponent does not matter much.
For any of the smaller values of $α$, the distribution of the graphs is well separated into two clusters of high density, for which the estimated exponent is high, where the condition for \cref{lem_sublinear_with_cheap_start} is not fulfilled, and significantly lower otherwise.

\begin{figure}
    \centering
    \includegraphics[width = \textwidth]{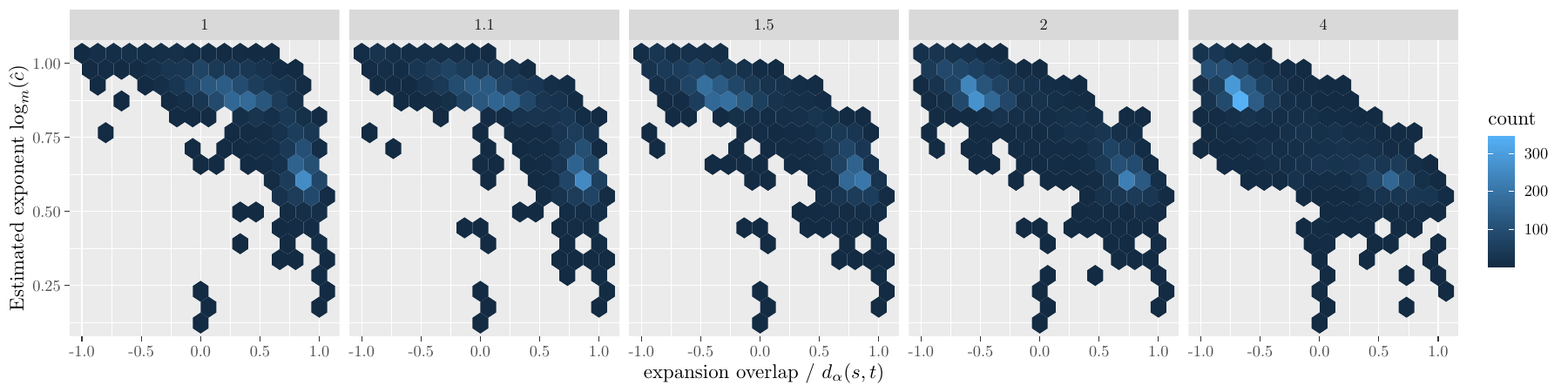}
    \caption{Estimated exponent for parameter $c$ of \cref{lem_sublinear_with_cheap_start} for $α = 0.1$ and different values of $b$.}
    \label{fig_different_b}
\end{figure}

Similarly, \cref{fig_different_b} shows how different values for the minimum base of the expansion $b$ influence the results only slightly as long as $b$ is sufficiently far from $1$.
We again plot the same distribution as above, this time for different values of $b$ between $1$ and $4$.
If $b$ is set to $1$, then it suffices for two consecutive exploration steps to have non-decreasing cost in order for them to be considered as $b$-expanding.
This increases the lengths of the expansion overlaps slightly, and shifts the plotted distribution to the right, leading to more vertices that fulfil the condition of \cref{lem_sublinear_with_cheap_start}.
However it is arguably not very reasonable to expect to observe asymptotic behavior on instances of rather small constant size, if exponential growth is allowed a base too close to $1$.
For only slightly larger bases such as $1.1$ and even more so $2$ and $4$, the correspondence of theoretical predictions and empirical observations becomes much better.
This justifies our choice of $b=2$ and $α=0.1$ in the main part of this paper not just for the sake of simplicity and clarity.

\end{document}